\renewcommand\section{\@startsection {section}{1}{\z@}%
{-3.5ex \@plus -1ex \@minus -.2ex}%
{2.3ex \@plus.2ex}%
{\normalfont\fontfamily{phv}\fontsize{16}{19}\bfseries}}
\renewcommand\subsection{\@startsection{subsection}{2}{\z@}%
{-3.25ex\@plus -1ex \@minus -.2ex}%
{1.5ex \@plus .2ex}%
{\normalfont\fontfamily{phv}\fontsize{14}{17}\bfseries}}
\renewcommand\subsubsection{\@startsection{subsubsection}{3}{\z@}%
{-3.25ex\@plus -1ex \@minus -.2ex}%
{1.5ex \@plus .2ex}%
{\normalfont\normalsize\fontfamily{phv}\fontsize{14}{17}\selectfont}}
\newtheorem{theorem}{Theorem}
\newtheorem{lemma}{Lemma}
\newtheorem*{remark}{Remark}
\begin{document}
\def\spacingset#1{\renewcommand{\baselinestretch}%
{#1}\small\normalsize} \spacingset{1}
\newcommand{\blind}{0}
\if0\blind{
\title{\bf Transform-Resampled Double Bootstrap Percentile with Applications in System Reliability Assessment}
\author[1,2]{Junpeng Gong}
\author[2]{Xu He} 
\author[2]{Zhaohui Li\thanks{lizh@amss.ac.cn}}
\affil[1]{School of Mathematical Sciences, University of Chinese Academy of Sciences, Beijing, China}
\affil[2]{State Key Laboratory of Mathematical Sciences (SKLMS), Academy of Mathematics and Systems Science, Chinese Academy of Sciences, Beijing, China}
\date{}
\maketitle}\fi
\if1\blind{
\title{\bf  Transform-Resampled Double Bootstrap Percentile with Applications in System Reliability Assessment}
\author{}
\date{}
\vspace{6cm}  
\maketitle
\vspace{3cm}} \fi

\begin{abstract}
System reliability assessment(SRA) is a challenging task due to the limited experimental data and the complex nature of the system structures.  
Despite a long history dating back to \cite{buehler1957confidence}, exact methods have only been applied to SRA for simple systems.
High-order asymptotic methods, such as the Cornish-Fisher expansion, have become popular for balancing computational efficiency with improved accuracy when data are limited, but frequently encounter the "bend-back" problem in high-reliability scenarios and require complex analytical computations.
To overcome these limitations, we propose a novel method for SRA by modifying the double bootstrap framework, termed the double bootstrap percentile with transformed resamples. 
In particular, we design a nested resampling process for log-location-scale lifetime models, eliminating the computational burden caused by the iterative resampling process involved in the conventional double bootstrap.
We prove that the proposed method maintains the high-order convergence property, thus providing a highly accurate yet computationally efficient confidence limit for system reliability. 
Moreover, the proposed procedure is straightforward to implement, involving only a simple resampling operation and efficient moment estimation steps. 
Numerical studies further demonstrate that our approach outperforms the state-of-the-art SRA methods and, at the same time, is much less susceptible to the bend-back issue.
\end{abstract}

\noindent%
{\it Keywords:} System reliability assessment; 
Double bootstrap; Log-location-scale family; 
Bend back.

\spacingset{1.9} 

\section{Introduction}
In this study, we focus on the problem of system reliability assessment (SRA) using component-level data, a field pioneered by \cite{buehler1957confidence}. 
The reliability of a product is typically defined as the probability that the product will perform satisfactorily up to a specified mission time \citep{henley1981}.
A major task in reliability assessment is determining the lower confidence limit (LCL) that encompasses the true reliability at a specified confidence level \citep{henley1981} using experimental data.
System reliability assessment is a fundamental concept in engineering, physics, and finance, providing a durability index that enhances risk management and competitive advantage \citep{meeker2022statistical}. 

Many modern engineering systems are extremely complex, making system-level lifetime experiments expensive or even intractable. 
In such cases, we must calculate the LCL using component-level data. 
In this paper, we focus on scenarios where the system comprises several components whose interconnections are known from mechanical domain knowledge and engineering features \citep{meeker2022statistical, hong2014confidence,li2020higher}. 
The initial work of \cite{buehler1957confidence}  in the SRA area proposed an exact confidence limit for system reliability.
However, its applications are limited to binomial component models and simple system structures, such as series or parallel systems with a few components. 
The method suffers from a heavy computational burden as the system structure gets complex \citep{du2020exact}.
Therefore, asymptotic approximations have been widely developed for assessing system reliability \citep{meeker2022statistical}.
One of the most popular asymptotic methods for SRA is the delta method \citep{hong2014confidence}, which leverages the asymptotic normality of the maximum likelihood estimator (MLE) of system reliability. 
The delta method provides an efficient and asymptotically accurate LCL for system reliability and performs well in scenarios involving a large number of products under testing. 

However, in fields such as aerospace engineering and solar cell systems \citep{meeker2022statistical}, the prohibitive cost and time requirements of lifetime testing render large sample sizes unattainable.
In these sectors, small sample sizes are common, which limits the effectiveness of the delta method. 
To improve the accuracy of interval estimation for system reliability in small sample scenarios, \cite{winterbottom1980asymptotic} introduced the WCF method. 
This approach constructs a polynomial transformation of the point estimation for model parameters. 
The coefficients are adjusted to ensure that the distribution of the polynomial converges to the standard normal distribution at a higher order, rather than the estimation itself, as in the delta method. 
This yields a higher-order confidence bound construction using the normal approximation.
The WCF expansion is widely used in constructing the LCL for the reliability of solid-state power controllers \citep{cai2017wcf} and for the population proportion using group testing with misclassification \citep{xiong2017confidence}.
Recent developments have further refined the WCF method.
For example, \cite{yu2007statistical} expanded its application to more complex systems, while \cite{li2020higher} recently adapted it for multi-parameter lifetime models.

Despite these developments, challenges remain in SRA.
First, the LCLs produced by the aforementioned asymptotic methods often fall outside the valid interval $[0,1]$ (referred to as the 'falling outside' issue) and exhibit the bend-back issue \citep{Hong2007}. 
The term 'bend-back' refers to a paradoxical phenomenon that the LCL decreases as the true reliability increases, contradicting intuitive expectations.
This issue, first discovered by \cite{Hong2007}, is prevalent in methods used to compute reliability confidence limits.
Second, the aforementioned higher-order asymptotic methods require extensive mathematical derivations for calculating SRA. 
These derivations must be independently developed for each specific system structure. 
The lack of a universal derivation methodology applicable across diverse system structures considerably limits the flexibility and efficiency of these methods in practical engineering applications.



The bootstrap method \citep{efron1994introduction} offers an alternative approach that can avoid the falling outside and bend-back issues, and seems promising in addressing these challenges in SRA problems. 
The bootstrap method achieves an asymptotically accurate estimation of the LCLs through a resampling process \citep{efron1994introduction}.
Numerous variations of bootstrap methods have been proposed, such as basic bootstrap, also called empirical bootstrap, and bootstrap percentile method, which can be applied within parametric and non-parametric frameworks \citep{martin}. 
Bootstrap methods are typically first-order accurate, i.e., the same as the delta method.
The double bootstrap method \citep{hall1986bootstrap, beran1987prepivoting} introduces an iterated layer of resampling into each initial bootstrap sample, achieving higher-order convergence.

Despite the improvements in the convergence order, the bootstrap-type methods have not received much attention in the reliability area. 
The most recent study by \cite{marks2014applying} explored the application of non-parametric bootstrapping in reliability assessments of simple series or parallel two-component systems. 
It was noted that these bootstrap methods can introduce a systematic bias, the extent of which depends on the structural configuration of the system. 
To date, there has been little effort to further explore the use of the advanced version, such as the double bootstrap method, in reliability assessment. 
This lack of progress can be attributed to several challenges.
Firstly, the bootstrap method has not yet been explored for use in scenarios involving heterogeneous data sizes and censored data, both of which are common in reliability analysis.
Secondly, all double bootstrap methods are computationally intensive due to the recursion of the resampling procedure and a large number of point estimation steps.
To illustrate, let $B$ and $C$ denote the number of resamples in the two layers, respectively. 
The double bootstrap method, therefore, requires $BC$ repetitions for sampling the data and point estimation processes. 
According to \cite{booth1994monte,booth1998allocation}, it is recommended to set the value of $BC$ at around $10^6$. 
This implies that the double bootstrap method requires millions of resamples for each component in SRA, which becomes unaffordable for complex systems with lots of components.
Attempts to reduce the computational burden for the double bootstrap methods have been investigated. 
For example, \cite{chang2015} has shown that setting $C=1$ in the second layer can achieve the same order of accuracy as using $C=10^3$ for the bias correction tasks, 
However, none of them are suitable for constructing confidence intervals.

 
In this article, we propose a computationally efficient double bootstrap framework for SRA that addresses these challenges. 
In particular, we accelerate the resampling process by directly generating the moment-based reliability estimates, bypassing conventional data resampling and estimation steps.
Furthermore, we transform and reuse these estimates across the second bootstrap layer. 
We call the proposed method \textit{Double Bootstrap Percentile with Transformed resamples} (DBPT).
The primary contributions of our methods are summarized as follows. 
First, the DBPT is computationally efficient compared to the conventional double bootstrap method.
The computation cost is reduced from $O(snBC)$ to $O(s(B+C)+sBC)$ where $s$ denotes the number of components and $n$ is the sample size.
Second, our method always offers reasonable LCLs that fall inside the valid range $[0,1]$ and rarely exhibits the bend-back problem, which will be illustrated in the numerical study.
Third,  by using the properties of the moment-based reliability estimates, we achieve the same order acceleration in constructing confidence intervals compared to the acceleration in \cite{chang2015} for bias correction.
Moreover, we prove that the new method maintains high-order asymptotic accuracy, the same as the conventional double bootstrap methods.
Finally, to handle censored data, we successfully apply our method after completing the data using an imputation algorithm \citep{xiao2014study}.
In summary, the proposed method addresses the challenges of applying the double bootstrap method in SRA while avoiding the bend-back and falling outside issues, which are commonly encountered in other higher-order methods.
This approach consistently provides a highly accurate LCL for complex system reliability assessment for various sample sizes of component failure data, as demonstrated in Section \ref{sec:numerical_studies}.

The structure of this article is organized as follows. 
Section~\ref{sec:background} introduces the background of system reliability assessment. 
Section~\ref{sec:bootstrap_method} details our approach and the theoretical results. 
Section~\ref{sec:numerical_studies} displays numerical studies demonstrating the effectiveness of our method. 
Section~\ref{sec:conclusions} provides a summary of our contributions, as well as a discussion of limitations and future research directions.

\section{System Reliability Assessment}
\label{sec:background}
This section provides the necessary preliminaries, including the definition of the system, the component lifetime model, and reviews the conventional methods for SRA.

\subsection{System models and component failure-time distributions}

In this article, we focus on the reliability assessment of a coherent system. 
A coherent system comprising $s$ components operates based on the following three fundamental rules. 
First, at any given mission time $t$, each component and the system are either functional or non-functional, e.g., failure. 
Second, each component has a non-negligible impact on the system, i.e., enhancing the reliability of any individual component contributes to improving the system's reliability. 
Third, the components are operated independently. 
For a coherent system, the system reliability at a specific mission time $t$, $R(t)$, can be expressed as 
\begin{equation*}
    R(t) = \text{P}(T >t) = \psi(r_1(t),\ldots,r_s(t)),
\end{equation*}
where $T$ denotes the failure time of the system, $\text{P}(T >t)$ gives the probability that the system remains functional by time $t$, $r_i(t)$ denotes the reliability of the $i$-th component at $t$, and the system structure function $\psi$ can be determined by incorporating mechanical domain knowledge and engineering features. 
For instance, some system structures can be represented by the system diagram shown in Figure \ref{fig:system}.
The structure functions are 1) $\psi (x_1, \ldots, x_s) = \prod_{i=1}^s x_i$ for the series system with $s$ components, 2) $\psi(x_1, \ldots, x_s) = 1 - \prod_{i=1}^s (1 - x_i)$  for the parallel system with $s$ components, and 3) $\psi(x_1, x_2, x_3, x_4) = \{1 - (1 - x_1)(1- x_2)\}\{1 - (1 - x_3)(1-x_4)\}$ for the $2 \times 2$ series-parallel system with component-level redundancy.

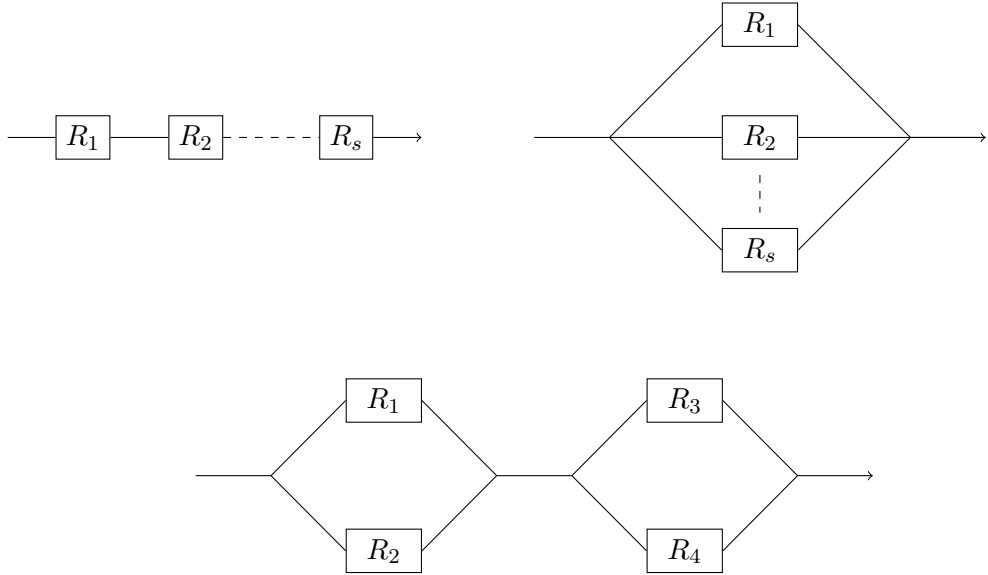
\begin{figure}
\centering
\begin{tikzpicture}
    \node[draw, rectangle] (a) at (-7,0) {$R_1$};
    \node[draw, rectangle] (b) at (-5.5,0) {$R_2$};
    \node[draw, rectangle] (c) at (-3.5,0) {$R_s$};
    
    \draw[-] (-8,0) -- (a);
    \draw[-] (a) -- (b);
    \draw[dashed] (b) -- (c);
    \draw[->] (c) -- (-2.5,0);

        \draw (-1,0) -- (0,0);
        \draw (4,0) -- (5,0);
    
        \node[draw, rectangle, minimum width=1cm, minimum height=0.5cm] (a) at (2,1.5) {$R_1$};
        \node[draw, rectangle, minimum width=1cm, minimum height=0.5cm] (b) at (2,0) {$R_2$};
        \node[draw, rectangle, minimum width=1cm, minimum height=0.5cm] (c) at (2,-1.5) {$R_s$};
    
        \draw (0,0) -- (a.west);
        \draw (0,0) -- (b.west);
        \draw (0,0) -- (c.west);
    
        \draw[->]  (4,0) -- (5,0);
        \draw (a.east) -- (4,0);
        \draw (b.east) -- (4,0);
        \draw (c.east) -- (4,0);
    
        \draw[dashed] (2, -0.5) -- (2, -1);

        \node [draw, rectangle, minimum width=1cm, minimum height=0.5cm] (a) at (-3,-3.5) {$R_1$};
        \node [draw, rectangle, minimum width=1cm, minimum height=0.5cm] (b) at (-3,-5.5) {$R_2$};
        
        \node [draw, rectangle, minimum width=1cm, minimum height=0.5cm] (c) at (1,-3.5) {$R_3$};
        \node [draw, rectangle, minimum width=1cm, minimum height=0.5cm] (d) at (1,-5.5) {$R_4$};
        
        \draw (-5.5, -4.5) -- (-4.5, -4.5); 
        \draw (-1.5, -4.5) -- (-0.5, -4.5); 
        \draw[->] (2.5, -4.5) -- (3.5, -4.5); 
        \draw (-4.5, -4.5) -- (a.west);  
        \draw (a.east) -- (-1.5,-4.5);
        \draw (-4.5, -4.5) -- (b.west);
        \draw (b.east) -- (-1.5,-4.5);

        \draw (-0.5, -4.5) -- (c.west);  
        \draw (c.east) -- (2.5,-4.5);
        \draw (-0.5, -4.5) -- (d.west);
        \draw (d.east) -- (2.5,-4.5);
\end{tikzpicture}
\caption{System diagrams of a $s$-series (top left), $s$-parallel (top-right) and $2\times 2$ series-parallel (bottom) system.}
\label{fig:system}
\end{figure}



For each component, its lifetime is typically modeled as a random variable $T_i$.
The distribution of $T_i$ is assumed to follow some parametric distributions, summarized from engineering knowledge. 
Let $G_{\theta_i}(x)$ and $g_{\theta_i}(x)$ denote the cumulative distribution function (cdf) and probability density function (pdf), respectively, for the lifetime of the $i$-th component. 
The reliability of the $i$-th component is 
\begin{equation*}
    r_i(t) = P(T_i \geq t) = 1 - G_{\theta_i}(t).
\end{equation*}
Let $\boldsymbol{\theta} = (\theta_1, \ldots, \theta_s)$ represent the unknown parameters of the system.
The system reliability $R(t)$, a function of time $t$ parameterized by $ \boldsymbol{\theta} $, is sometimes denoted as $R_{\boldsymbol{\theta}}(t)$ to emphasize the dependence on $\boldsymbol{\theta}$. 

Given the lifetime data collected at the component lever, i.e., the data is $\mathcal{D} = \cup_{i=1}^s \mathcal{X}_i = \{t_{i,j}: i=1,...,s, j=1,...,n_i\}$, where $t_{i,j}$ gives the $j$th failure time of the $i$th component.
Throughout this paper, we assume that the data from different components are mutually independent. 
The goal of SRA is to establish a lower confidence limit $ R_L(t) $ for $ R(t) $ with a confidence level of $1-\alpha$, i.e., finding the largest possible $R_L(t)$ ensuring that 
\begin{equation}
    \text{P}(R(t)\geq R_L(t)) \geq 1-\alpha.
\end{equation}


\subsection{The normal approximation procedures}
\label{subsec:delta}

We herein introduce one of the most important approximation SRA methods, namely the delta method \citep{hong2014confidence}. 
Given data $\mathcal{D}$, the likelihood for $\boldsymbol{\theta}$ is given by 
\begin{equation}\label{eq:likelihood}
    L( \boldsymbol{\theta}\mid \mathcal{D} ) = \prod_{i=1}^{s} \prod_{j=1}^{n_i}  g_{\theta_i}(t_{ij}).
\end{equation}
Let $\boldsymbol{\hat \theta}$ denote the maximum likelihood estimator (MLE) of $\boldsymbol{\theta}$ obtained by maximizing \eqref{eq:likelihood}. The MLE of $R(t)$ is then $\hat R(t) = R_{\boldsymbol{\hat \theta}}(t)$.  For simplicity, we assume $n_1=\cdots=n_s=n$ hereinafter. 
Due to the Delta method, it is well known that $S_n = \sqrt{n}(\hat{R}(t) - R(t))$ converges in distribution to $N(0,\tau^2)$.
The asymptotic variance is $\tau^2 = \left( \nabla_{\boldsymbol{\theta}} R  \right)^T I^{-1}_{\boldsymbol \theta} \left( \nabla_{\boldsymbol{\theta}} R \right)$, where $I_{\boldsymbol \theta} =  \mathbb{E}\left\{ \nabla_{\boldsymbol{\theta}} \log L( \boldsymbol{\theta}\mid \mathcal{D} ) \cdot (\nabla_{\boldsymbol{\theta}} \log L( \boldsymbol{\theta}\mid \mathcal{D} ))^T   \right\}$ is the Fisher's information.
Here, $\nabla_{\boldsymbol{\theta}} R$ denotes the gradient of $R(t)$ with respect to (w.r.t.) $\boldsymbol{\theta}$ and $I_{\boldsymbol{\theta}}$ denotes the fisher information matrix.
In practice, a plug-in estimator for $\tau$ is given by replacing $\boldsymbol{\theta}$ by its estimation $\boldsymbol{\hat{\theta}}$, i.e., $\hat{\tau}^2 = \left( \nabla_{\boldsymbol{\theta}} R  \mid_{\boldsymbol{\theta}= \boldsymbol{\hat{\theta}}}  \right)^T I^{-1}_{\boldsymbol{\hat{\theta}}} \left( \nabla_{\boldsymbol{\theta}} R\mid_{\boldsymbol{\theta}= \boldsymbol{\hat{\theta}}} \right)$.
Then the $1-\alpha $ asymptotic upper confidence bound for $\hat{R}(t)$ is immediately obtained by 
\begin{equation}
    R_{\text{D}}(t) = 2\hat{R}(t) - z_{1-\alpha}\hat{\tau},
\end{equation}\label{eq:delta_var}
where $z_a$ is the $a$-upper quantile of the standard normal distribution.
It is derived from Edgeworth expansion that $R_{\text{D}}(t)$ is first-order asymptotically accurate, i.e., $\text{P}(R(t) \geq R_{\text{D}}(t)) = 1 - \alpha + O(n^{-1/2})$.
Note that the confidence set is said to be asymptotically accurate in the k-th order if its coverage probability remainder term is $O(n^{-k/2})$ \citep{shao2008mathematical}.

Despite its wide application, the delta method suffers from three major limitations in SRA.
First, the coverage probability errors of order only $O(n^{-1/2})$ can be insufficient in small sample scenarios. 
Second, the delta method uses a symmetric, unbounded normal distribution approximation for the distribution of $S_n$ while the actual distribution of $S_n$ is bounded and typically asymmetric. 
Therefore, $R_D$ often falls outside of the valid interval $[0,1]$ because the normal approximation fails to respect the bounded nature of $S_n$. 
Although \cite{Hong2007} introduced transformation-based methods to alleviate this issue in component-level reliability assessments,  an effective solution addressing these issues at the general system-level reliability assessment remains elusive.
Third, $R_D$ often exhibits the bend-back phenomenon, where the LCL paradoxically fails to decrease monotonically with $t$, as shown in Fig.~\ref{fig:bend back}. 
This behavior arises from instability in the standard error estimator $\hat{\tau}$ \citep{Hong2007}.

To address the limited accuracy of the delta method, higher-order asymptotic methods based on the Cornish-Fisher expansion are proposed \cite{li2020higher,winterbottom1980asymptotic}, achieving refined LCLs with reduced coverage errors of order at least $O(n^{-1})$.
These methods are implemented by constructing a polynomial transformation that adjusts $S_n$ toward normality via the C-F based expansion.
However, they require cumbersome analytical calculations in the explicit derivation of high-order cumulants to construct the polynomial.
Currently, no automated derivation procedures for general system structures exist to streamline this process, posing significant implementation challenges in real applications.
Moreover, these high-order methods are still struggling with the bend-back and falling outside issues.
In the remainder of this paper, we will introduce a novel method that addresses these three challenges simultaneously.

\begin{figure}
    \centering
    \includegraphics[width = 0.6\textwidth]{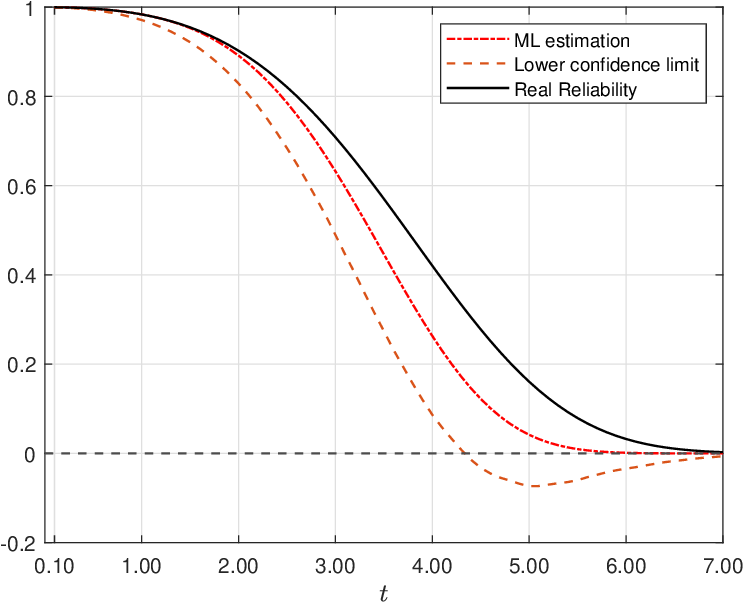}
    \caption{The real reliability (black solid line), ML estimation (red dash-dot line), LCL by delta method (red dashed line) 
    for a series system composed of three Weibull-type components with a sample size of $n=10$.}
    \label{fig:bend back}
\end{figure}

\section{Efficient Double Bootstrap Method for System Reliability Assessment}
\label{sec:bootstrap_method}
In this section, we propose a novel double bootstrap method for SRA. 
To ensure the logical flow of our presentation, Section \ref{subsec:boot_intro} first introduces how to apply bootstrap methods to SRA. 
Building on this, Section \ref{subsec:double_boot} discusses the potential and limitations of the conventional double bootstrap method. 
To address the challenges posed by the double bootstrap method, Section \ref{subsec:moment-estimation} introduces a moment-based estimation for the log-location-scale family, forming the basis for our approach. 
Finally, Section \ref{subsec:dbp} summarizes the proposed double bootstrap percentile method with transformed resamples for SRA.

\subsection{Basic bootstrap and bootstrap percentile method}\label{subsec:boot_intro}
The bootstrap methods \citep{efron1994introduction} have been extensively used in various inference tasks \citep{davison1997bootstrap,CensorSchneider}.
Two common parametric bootstrap methods used to construct confidence intervals are the basic bootstrap \citep{diciccio1996bootstrap} and the bootstrap percentile method \citep{martin}.
In this section, we briefly introduce these approaches for the SRA.

Denote the point estimation of system reliability as $\hat{R}(t) = \psi(\hat{r}_1,\ldots,\hat{r}_s)$, 
where $\hat{r}_i(t) = 1 - G_{\hat{\theta}_i}(t)$ is some point estimate for $r_i(t)$, $\hat{\theta}_i$ are some estimations for $\theta_i$ from the original data $\mathcal{X}_i$.
We first introduce how to obtain the bootstrap sample, which is the key step in implementing both bootstrap methods. 
First, generate $B$ samples $\mathcal{X}_{i,1}^*,\ldots,\mathcal{X}_{i,B}^*$ from the distribution $G_{\hat{\theta}_i}(t)$, which serves as an approximation to $G_{\theta_i}(t)$ from which the data $\mathcal{X}_i$ is drawn. 
Each of the samples, $\mathcal{X}_{i,j}^*$, has the same sample size as $\mathcal{X}_i$.
Next, we compute the bootstrap estimates $ \hat{\theta}_{i,j}^* $ and thus $\hat{r}_{i,j}^*(t)$ from the samples $ \mathcal{X}_{i,j}^* $ for $ i = 1, \ldots, s $ and $ j = 1, \ldots, B $.  
These component-level bootstrap estimates $\hat{r}_{i,j}^*$ are then aggregated to produce bootstrap estimates of the system reliability $ \hat{R}_j^* = \psi(\hat{r}_{1,j}^*,\ldots,\hat{r}_{s,j}^*)$.
The collection $ \hat{R}_1^*, \ldots, \hat{R}_B^* $ is called the \textit{bootstrap sample}, which forms an empirical approximation to the sampling distribution of $ \hat{R}(t) $.

Given the bootstrap sample, the $100(1-\alpha)\%$ \textit{bootstrap percentile LCL} for $ R(t) $ is immediately obtained by the sample quantile of the same level from the bootstrap sample, i.e., 
$ R_{\text{BP}} = \hat{R}_{(\lceil B\alpha \rceil)}^*$, where $ \lceil x \rceil $ denotes the smallest integer greater than or equal to $ x $ and $\hat{R}_{(a)}^*$ denote the $a$-th smallest value among $\hat{R}_1^*,\ldots,\hat{R}_B^*$.
This method rests on the principle that the empirical distribution of bootstrap estimates approximates the distribution of $\hat{R}(t)$. 
Consequently, the  $\hat{R}_{\text{BP}}$ is expected to cover $R(t)$ with probability approximately $1-\alpha$, mirroring the probability with which $R_{BP}$ covers $\hat{R}(t)$. 

On the other hand, the \textit{basic bootstrap method} is given by is $R_{\text{BB}} = 2\hat{R}(t) - \hat{R}^*_{(\lceil B(1-\alpha)\rceil)}$, which calculates the LCL in a form analogous to \ref{eq:delta_var} in delta method.
To illustrate it, consider the test statistic $ S_n = \sqrt{n} (\hat{R}(t) - R(t)) $. 
One can approximate the $1 - \alpha$ quantile of $S_n$'s distribution by finding the sample quantile of its bootstrap sample $ S_{n,j}^* = \sqrt{n} (\hat{R}_{j}^*(t) - \hat{R}(t)), j=1,\ldots, B$, denoted as $S_{n, (\lceil B(1 - \alpha) \rceil)}^*$. 
Rearranging the inequality $S_n \leq S_{n, (\lceil B(1 - \alpha) \rceil)}^*$ yields $R_{BB}$, ensuring $R(t) \geq R_{BB}$ holds with approximately $1 - \alpha$ probability.

Theorem~\ref{thm:BP and BB} establishes the theoretical properties for both the bootstrap percentile and basic bootstrap lower confidence limits.

\begin{theorem}
\label{thm:BP and BB}
Assume $n_i = n$ and the distribution $G_{\theta_i}(t)$ is absolutely continuous and  exists at least fourth-order moments for $i=1,\ldots,s$. 
The bootstrap percentile LCL, $R_{BP}$, is within the range $[0,1]$ and is monotonically decreasing w.r.t. time $ t $. 
As $B, n \rightarrow \infty$, the coverage probability of $R_{BP}$ and $R_{BB}$ are
\begin{align*}
\text{P} \left( R(t) \geq R_{BP} \right) & = 1- \alpha + O(n^{-1/2}),\\
\text{P} \left( R(t) \geq R_{BB} \right) & = 1- \alpha + O(n^{-1/2}).
\end{align*}
\end{theorem}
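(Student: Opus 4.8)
The plan is to separate the two kinds of claims: the structural properties of $R_{BP}$ (its range and monotonicity) follow from elementary arguments, whereas the coverage statements for both $R_{BP}$ and $R_{BB}$ require the smooth-function-of-means Edgeworth machinery. I would dispose of the structural claims first. Each bootstrap replicate is $\hat R_j^*(t) = \psi(\hat r_{1,j}^*(t),\ldots,\hat r_{s,j}^*(t))$ with component estimates $\hat r_{i,j}^*(t) = 1 - G_{\hat\theta_{i,j}^*}(t)\in[0,1]$; since the coherent structure function maps $[0,1]^s$ into $[0,1]$, every $\hat R_j^*\in[0,1]$, and hence the order statistic $R_{BP}=\hat R^*_{(\lceil B\alpha\rceil)}$ lies in $[0,1]$. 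For monotonicity, the bootstrap parameter estimates $\hat\theta_{i,j}^*$ do not depend on $t$, so each $\hat r_{i,j}^*(t)$ is non-increasing in $t$ (as $G$ is a cdf), and because $\psi$ is coordinate-wise non-decreasing for a coherent system, each $\hat R_j^*(t)$ is non-increasing in $t$. A pointwise-dominated family has order statistics that inherit the ordering: if $\hat R_j^*(t_1)\ge \hat R_j^*(t_2)$ for every $j$ whenever $t_1\le t_2$, then for each threshold $c$ the count $\#\{j:\hat R_j^*(t)\le c\}$ is non-decreasing in $t$, which forces every order statistic, and in particular $R_{BP}(t)$, to be non-increasing.

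The core of the theorem is the coverage statement, which I would attack through Edgeworth expansions. Since $\hat R(t)$ is a smooth function of the MLE vector $\hat{\boldsymbol\theta}$, and the fourth-moment and absolute-continuity hypotheses supply the required regularity (finite cumulants and Cram\'er's condition for the underlying statistic), $S_n=\sqrt n(\hat R(t)-R(t))$ admits the one-term expansion
\[
\text{P}(S_n \le x) = \Phi(x/\tau) + n^{-1/2} q_1(x/\tau)\phi(x/\tau) + O(n^{-1}),
\]
where $q_1$ is the polynomial carrying the leading skewness correction and $\tau$ is the delta-method standard deviation. Conditionally on the data, the bootstrap statistic $S_n^*=\sqrt n(\hat R_j^*-\hat R)$ obeys the parallel expansion with $\tau$ and $q_1$ replaced by their plug-in analogues $\hat\tau$ and $\hat q_1$, with an $O_p(n^{-1})$ remainder on a data event of probability $1-o(1)$. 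Letting $B\to\infty$ eliminates the Monte Carlo error, so the empirical bootstrap quantiles agree with the population bootstrap quantiles to the relevant order.

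With both expansions in hand I would express each coverage event in terms of $S_n$ and a bootstrap quantile. For the basic bootstrap, rearranging $R(t)\ge R_{BB}$ gives the event $S_n\le \hat q^*_{1-\alpha}$, where $\hat q^*_{1-\alpha}$ is the $(1-\alpha)$-quantile of $S_n^*$; for the percentile method, $R(t)\ge R_{BP}$ reduces to $S_n\le -\hat q^*_{\alpha}$, with $\hat q^*_{\alpha}$ the $\alpha$-quantile of $S_n^*$. In each case I would substitute the Cornish--Fisher inversion of the bootstrap quantile into the Edgeworth expansion of $S_n$ and track the leading discrepancy. The decisive observation is that $S_n$ is \emph{not} pivotal: its distribution depends on the unknown $\tau$, and the bootstrap supplies only $\hat\tau$ with $\hat\tau-\tau=O_p(n^{-1/2})$. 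Because the bootstrap quantile effectively thresholds $S_n$ at a random level $\tau z_\alpha + (\hat\tau-\tau)z_\alpha + o_p(n^{-1/2})$, and because $\hat\tau$ is correlated with $\hat R$ (so this correlation does not cancel in the one-sided coverage), an irreducible $O(n^{-1/2})$ term enters that dominates the skewness correction. This yields $\text{P}(R(t)\ge R_{BB})=1-\alpha+O(n^{-1/2})$ and, identically, the same for $R_{BP}$.

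The main obstacle I anticipate is not the quantile bookkeeping but the rigorous verification of the two Edgeworth expansions, and especially the bootstrap one. I would need to confirm that $\hat R(t)$ is a genuinely smooth, non-degenerate function of the sample moments or MLEs so that the standard theory applies, and that Cram\'er's condition holds for the relevant statistic so the expansion is valid rather than merely formal. For the bootstrap expansion the delicate point is uniformity: the plug-in cumulants must converge fast enough that the conditional expansion holds with an $O_p(n^{-1})$ remainder on a set of data of probability tending to one. Once these are secured---most cleanly by invoking Hall's Edgeworth theory and the Bhattacharya--Ghosh smooth-function-of-means results after checking the moment and continuity hypotheses stated in the theorem---the coverage conclusions follow from the quantile-substitution argument above.
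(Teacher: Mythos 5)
Your proposal is correct and follows essentially the same route as the paper: the range and monotonicity claims are handled by the same order-statistic counting argument, and the coverage claim rests on verifying smoothness of $\psi$, Cram\'er's condition (via absolute continuity), and the moment hypotheses so that the standard one-term Edgeworth/bootstrap expansion theory applies. The only difference is presentational—the paper establishes smoothness by noting $\psi$ is a polynomial (its Lemma 1) and then delegates the quantile-substitution bookkeeping to a citation of Shao (2008, p.~510), whereas you sketch that machinery explicitly.
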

The proof of Theorem \ref{thm:BP and BB} is provided in the supplementary materials.

Both methods are first-order asymptotically accurate.
However, the LCL obtained via the basic bootstrap method is more likely to encounter issues, including bend-back behavior and invalid confidence limits falling outside $[0, 1]$, for reasons similar to those affecting the delta method. 
In contrast, the bootstrap percentile method avoids these issues since the percentile $\hat{R}_{(\lceil \alpha \rceil)}(t)$ always lies within $[0, 1]$ and maintains monotonicity w.r.t. $t$. 
As a result, we recommend the \textit{bootstrap percentile method} for practical use in SRA.


\subsection{Conventional Double bootstrap percentile method}\label{subsec:double_boot}

As discussed in Section~\ref{sec:background}, first-order methods are typically insufficiently accurate in small sample scenarios for SRA, which is common in reliability engineering for highly reliable products. 
Consequently, 
we extend and apply the double bootstrap percentile (DBP) method \citep{martin, hall1986bootstrap, beran1987prepivoting} to the SRA problem and reach a second-order accurate LCL using a nested resampling process. 
In this section, we start our method by introducing the conventional double bootstrap for SRA.
The DBP with nested resampling procedure is proposed in Section \ref{subsec:moment-estimation} and \ref{subsec:dbp}.

The DBP begins by generating the first-level bootstrap sample $ \hat{R}_1^*, \ldots, \hat{R}_B^* $, as in the standard bootstrap percentile method. 
For $i = 1, \ldots, s $ and $j = 1, \ldots, B $, we resample $ \mathcal{X}_{i,j,1}^{**}, \ldots, \mathcal{X}_{i,j,C}^{**} $ from the distribution $ G_{\hat{\theta}_{i,j}^*}(t) $. 
Following the same procedure as obtaining $\hat{R}_{j}^*$, we obtain the second layer bootstrap estimate  $\hat{R}_{j,l}^{**}$ from $\mathcal{D}_{j,l}^{**} = \cup_{i=1}^s \mathcal{X}_{i,j,l}^{**}$ for $j=1,\ldots B$ and $l=1,\ldots,C$.
Let $ \hat{\alpha} = \hat{u}_{( \lceil C\alpha \rceil )}$, where $ \hat{u}_j = C^{-1} \sum_{l= 1}^C I(\hat{R}_{j,l}^{**} \leq\hat{R}), j=1,\dots,B$, $I(\cdot)$ is the indicator function.
The double bootstrap percentile LCL is given by
\begin{equation*}
R_{DBP} = \hat{R}^*_{(\lceil B \hat{\alpha} \rceil)}.
\end{equation*}

The following theorem summarizes the properties of the DBP method.

\begin{theorem}
\label{thm:order of DBP}
Assume $n_i = n$ and the distribution $G_{\theta_i}(t)$ is absolutely continuous and  exists at least fourth-order moments for $i=1,\ldots,s$. 
The double bootstrap percentile lower confidence limit, $R_{DBP}(t)$, is within $[0,1]$. As $B, C, n \rightarrow \infty$, the coverage probability is
\begin{equation}
\label{equ: order DBP}
\text{P} \left( R(t) \geq R_{DBP} \right) = 1- \alpha + O(n^{-1}). 
\end{equation}
\end{theorem}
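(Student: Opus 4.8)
The plan is to show that the second resampling layer calibrates the nominal level so as to annihilate the $O(n^{-1/2})$ term that governs the coverage error of the single bootstrap percentile in Theorem~\ref{thm:BP and BB}; this is the iterated-bootstrap/prepivoting mechanism of \cite{hall1986bootstrap,beran1987prepivoting}, adapted to the system estimate. The whole argument rests on Edgeworth and Cornish--Fisher expansions, so the first step is to establish them. Because the components are mutually independent and each $\hat\theta_i$ is, to the relevant order, a smooth function of sample means (through its score/estimating equations), the system estimate $\hat R(t)=\psi(\hat r_1,\dots,\hat r_s)$ is a smooth function of a vector of component-wise sample means. Under absolute continuity of $G_{\theta_i}$ (which supplies Cram\'er's condition) and the assumed fourth moments, the root $S_n=\sqrt n(\hat R-R)$ then admits a two-term Edgeworth expansion
\begin{equation*}
\text{P}(S_n\le x)=\Phi(x)+n^{-1/2}p_1(x)\phi(x)+n^{-1}p_2(x)\phi(x)+O(n^{-3/2}),
\end{equation*}
uniformly in $x$, with polynomials $p_1,p_2$ determined by the low-order cumulants of $\hat R$. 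The first- and second-level bootstrap distributions obey the same expansions with coefficients $\hat p_1,\hat p_2$ satisfying $\hat p_k=p_k+O_p(n^{-1/2})$; this is the engine for everything that follows.

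Next I would turn the construction into a calibration statement. Writing $\hat G$ for the first-level bootstrap c.d.f. of $\hat R^*$, the percentile limit at a nominal level $a$ is $\hat G^{-1}(a)=\hat R^*_{(\lceil Ba\rceil)}$, and inverting the bootstrap quantile through Cornish--Fisher and composing with the expansion of $\hat R$ gives its coverage
\begin{equation*}
\beta(a)=\text{P}\big(R(t)\ge \hat G^{-1}(a)\big)=1-a+n^{-1/2}q_1(z_a)\phi(z_a)+O(n^{-1}),
\end{equation*}
where the $q_1$ term is exactly the source of the $O(n^{-1/2})$ error in Theorem~\ref{thm:BP and BB}. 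The key observation is that, as $C\to\infty$, $\hat u_j\to \hat H_j(\hat R)$, the value at $\hat R$ of the conditional c.d.f. $\hat H_j$ of the second-level estimate given the $j$-th first-level resample. In the bootstrap world $\hat R$ plays the role of the truth and $\hat H_j$ the role of the bootstrap c.d.f., so the event $\hat u_j\ge a$ is (to the relevant order) the event that the single-bootstrap percentile at level $a$ covers $\hat R$; hence the distribution of $\{\hat u_j\}$ reproduces the coverage function $\beta$ one level down. Setting the bootstrap-world coverage $\text{P}^*(\hat u\ge a)$ equal to $1-\alpha$ identifies $a$ with the $\alpha$-quantile of $\hat u$, which is precisely $\hat\alpha=\hat u_{(\lceil B\alpha\rceil)}$, and yields the expansion
\begin{equation*}
\hat\alpha=\alpha-n^{-1/2}\hat q_1(z_\alpha)\phi(z_\alpha)+O_p(n^{-1}),\qquad \hat q_1=q_1+O_p(n^{-1/2}).
\end{equation*}

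The conclusion then follows by composing the two expansions. The coverage of $R_{DBP}=\hat G^{-1}(\hat\alpha)$ is expanded directly; because $\hat\alpha$ subtracts exactly the term $n^{-1/2}\hat q_1(z_\alpha)\phi(z_\alpha)$ that the percentile mechanism adds through $q_1$, the two nearly cancel:
\begin{equation*}
\text{P}\big(R(t)\ge R_{DBP}\big)=1-\alpha+n^{-1/2}\big(q_1(z_\alpha)-\hat q_1(z_\alpha)\big)\phi(z_\alpha)+O(n^{-1}).
\end{equation*}
Since $q_1-\hat q_1=O_p(n^{-1/2})$, the surviving term is of order $n^{-1}$, giving the stated rate. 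The Monte Carlo errors from finite $B,C$ are handled separately: as $B,C\to\infty$ both $\hat G$ and each $\hat u_j$ concentrate at their population counterparts, so they do not disturb the $n$-rate. The containment $R_{DBP}\in[0,1]$ is immediate, since $R_{DBP}=\hat R^*_{(\lceil B\hat\alpha\rceil)}$ is a realized value $\psi(\hat r^*_1,\dots,\hat r^*_s)$ of the structure function, exactly as argued for $R_{BP}$ in Theorem~\ref{thm:BP and BB}.

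The main obstacle I anticipate lies in making the calibration step rigorous: one must justify substituting the \emph{random, data-dependent} level $\hat\alpha$ into the coverage expansion while preserving its form, and must control the joint fluctuations of $\hat R$, the quantile map $\hat G^{-1}$, and $\hat\alpha$, which are mutually correlated through the common data. The decisive technical point is that the estimated polynomial coefficients differ from the true ones only by $O_p(n^{-1/2})$, and because each such discrepancy is always multiplied by a further $n^{-1/2}$ factor arising from the calibration, every one of them is pushed into the $O(n^{-1})$ remainder; uniformity of the Edgeworth expansions in $x$ is what licenses this bookkeeping.
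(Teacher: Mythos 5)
Your proposal is correct and follows essentially the same route as the paper: the paper's own proof simply verifies the regularity conditions (smoothness of $\psi$ via its polynomial form, Cram\'er's condition from absolute continuity, and the fourth-moment assumption, all carried over from Theorem~\ref{thm:BP and BB}) and then invokes Theorem~1 of \cite{martin}, whose content is exactly the Edgeworth/Cornish--Fisher calibration mechanism you reconstruct. Your sketch unpacks the cited result rather than departing from it, and your identification of the cancellation of the $n^{-1/2}q_1$ term by the calibrated level $\hat\alpha$ matches the prepivoting argument of \cite{beran1987prepivoting} on which both the paper and \cite{martin} rely.
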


We now explain in detail how the double bootstrap percentile corrects the second-order bias in the bootstrap percentile procedure.
It is shown that the double bootstrap percentile method adjusts the percentile level $\alpha$ of the bootstrap percentile method to $\hat{\alpha}$. 
Denote the distribution function of $\hat{R}_{j}^*$ as $U(x) = \text{P}(\hat{R}_{j}^* \leq x \mid \mathcal{D})$.
The coverage probability of the  $R_{BP} = \hat{R}^*_{( \lceil B\alpha \rceil)}$ is 
\begin{equation*}
    C(\alpha) = \text{P} \left(R \geq \hat{R}^*_{( \lceil B\alpha \rceil)} \right) = \text{P} \left(U(R) \geq \alpha) \right).
\end{equation*}
Ideally, we favor a percentile form LCL $\hat{R}^*_{( \lceil B\tilde{\alpha} \rceil)}$, where $\tilde{\alpha}$ satisfies $C(\tilde{\alpha}) = 1-\alpha$. To approximate $\tilde{\alpha}$, the DBP method employs a second-layer bootstrap resampling. 
This generates $\hat{u}_j$'s whose empirical distribution approximates the distribution of $U(R)$. 
The $\alpha$ quantile of $\hat{u}_j$'s, i.e., $\hat{\alpha}$, then serves as an estimate of $\tilde{\alpha}$. 
Consequently, the coverage level of $R_{DBP} = \hat{R}^*_{( \lceil B\hat{\alpha} \rceil)}$, $C(\hat{\alpha})$, achieves closer alignment with the target $C(\tilde{\alpha})$ compared to the $C(\alpha)$. 


Despite the advantages of the double bootstrap percentile method from an accuracy perspective, the significant computational demand caused by the nested resampling process prevents its practical application in SRA.
Specifically, it requires generating $B(C+1)$ resamples of $\mathcal{X}_i$, each requiring a corresponding estimate $\hat{r}_i$ to be computed. 
Therefore, the total time consumption is $O(snBC)$ for SRA with $s$ components. 
Typically, the value of $BC$ should be at least $10^5$ for accepted performance \citep{booth1994monte}, resulting in a minimum of $10^5$ resampling and computation. 
This becomes particularly challenging in SRA when MLE is adopted for estimating component reliability. 
For lifetime models, such as the Weibull distribution, a closed-form for MLE often lacks and is obtained through optimization algorithms. 
Consequently, applying the double bootstrap percentile method in SRA involves repeating  $10^5$ times of resampling and optimization process.
For systems with large $s$ (e.g., complex systems with many components), this process becomes computationally prohibitive, even with modern computing resources.

To address these challenges, we herein modify the double bootstrap percentile method from two perspectives:
First, we propose a moment-based estimation method for estimating the parameters in the log-location-scale family, introduced in Section \ref{subsec:moment-estimation}, which is efficient to compute.
Second, we modify the second layer of the double bootstrap percentile method by replacing the large number of resampling-estimation steps with the re-estimation steps directly, thus reducing the computational cost.
This modification is introduced in detail in Section \ref{subsec:dbp}.

\subsection{Moment-based estimation for log-location-scale family}\label{subsec:moment-estimation}

In this section, we introduce an efficient method for estimating component reliability $r_i(t)$ based on moments of the log-location-scale distribution, which generalizes the results in \cite{li2023optimal,li2016design}.

Unlike results given in Sections~\ref{subsec:boot_intro} and~\ref{subsec:double_boot}, in this subsection we specifically assume the component data follow the log-location-scale distribution. 
This family of distributions includes widely used distributions in reliability analysis, such as the Weibull, log-normal, and exponential distributions \citep{meeker2022statistical}.
Let $X_i = \log T_i$ denote the log-transformed random variable of $i$-th component lifetime and $F_i((x-\mu_i)/\sigma_i)$ denote the cdf of $X_i$. 
The reliability of the $i$-th component is 
\begin{equation}
\label{equ:reliability}
    r_i(t) = \text{P} (T_i \geq t) = \text{P} (X_i \geq \log t) = 1 - F_i ((\log t - \mu_i)/\sigma_i).
\end{equation}
The population mean and variance for $X_i$ is $\mathbb{E}(X_i) = \mu_i + \kappa_{i1} \sigma_i, \mathrm{var} (X_i) = \kappa_{i2}^2 \sigma_i^2$, where $\kappa_{i1}$ and $\kappa_{i2}^2$ denote the population mean and variance of the distribution $F_i(x)$.

Consider the log-transformed lifetime data $x_{i,j} = \log t_{i,j}, j = 1,\ldots,n_i$.   
Let $\bar{x}_{i} = n_i^{-1}\sum_{j=1}^{n_i} x_{i,j}$
and $s_i^2 = (n_i-1)^{-1} \sum_{j=1}^{n_i} (x_{i,j} - \bar{x}_i)^2$ denote the sample mean and sample variance, respectively. The moment based estimator for $\sigma_i$ and $\mu_i$ is 
\begin{equation}
\label{equ:mu and sigma}
     \hat {\sigma}_i = s_i/ \kappa_{i2}, \quad \hat{\mu}_i = \bar{x}_i -  \kappa_{i1} \hat {\sigma}_i.
\end{equation}
By substituting the estimators $ \hat{\mu}_i $ and $ \hat{\sigma}_i $ for $ \mu_i $ and $ \sigma_i $ in \eqref{equ:reliability}, the moment based estimator of $r_i(t)$ is given by
\begin{equation}
\label{equ:estimator}
    \hat{r}_i(t) = 1 - F_i ((\log t - \hat{\mu}_i)/\hat{\sigma}_i).
\end{equation}
According to the law of large numbers, the estimators $\hat{\mu}_i$ and $\hat{\sigma}_i$ are consistent, thereby ensuring the consistency of $\hat{r}_i(t)$ as well. 
This yields an analytical formula for obtaining the reliability estimate of components, as opposed to a numerical solution. 
Furthermore, we discover that the moment estimator $\hat{r}_i(t)$ depends on the unknown parameters $\mu_i$ and $\sigma_i$ solely through $r_i(t)$, thus a single-parameter random variable, which is presented in the following theorem.
\begin{theorem}
\label{thm:mme}
Assume $t_{i,j},j=1,\ldots,n_i$ follow the log-location-scale distribution $F_i((\log t - \mu_i)/\sigma_i)$. Let $\hat{r}_i(t)$ denote the moment-based estimator derived from \eqref{equ:mu and sigma} and \eqref{equ:estimator}. 
Then the distribution of $\hat{r}_i(t)$ only depends on the true reliability $r_i(t)$ and $F_i(x)$, whose form is given by
\begin{equation}
    \label{equ:MME}
    \hat{r}_i(t) \sim 1 - F_i \left[ \left\{F_i^{-1}(1 - r_i(t)) -  \bar{Z}_{n_i}\right\}  \kappa_{i2} /M_{n_i} + \kappa_{i1} \right],
\end{equation}
where $ \bar{Z}_{n_i} $ and $ M_{n_i}^2 $ are the sample mean and variance of $ n_i $ log-transformed samples $\{x_i = \log t_i, i=1,2,...,n_i\}$, $ \kappa_{i1} $ and $ \kappa_{i2}^2 $ are the population mean and variance of the distribution $ F_i(x) $, respectively.
\end{theorem}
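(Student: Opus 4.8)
The plan is to show that $\hat r_i(t)$ is, after a change of variables, a deterministic function of a \emph{pivotal} statistic whose law is free of the nuisance parameters $(\mu_i,\sigma_i)$. The engine is the location–scale structure of $F_i((\log t-\mu_i)/\sigma_i)$: writing $x_{i,j}=\log t_{i,j}$, I represent $x_{i,j}=\mu_i+\sigma_i z_{i,j}$, where the standardized variables $z_{i,j}=(x_{i,j}-\mu_i)/\sigma_i$ are i.i.d.\ with cdf $F_i$ and, crucially, their joint distribution does not depend on $(\mu_i,\sigma_i)$. I take $\bar Z_{n_i}$ and $M_{n_i}^2$ to denote the sample mean and variance of these \emph{standardized} samples $z_{i,j}$, so that $\bar x_i=\mu_i+\sigma_i\bar Z_{n_i}$ and $s_i=\sigma_i M_{n_i}$; clarifying this reading of the notation is the first thing to pin down, since it is what makes the cancellation below meaningful.

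First I would substitute these affine relations into the moment estimators \eqref{equ:mu and sigma} and form the standardized argument appearing inside $F_i$ in \eqref{equ:estimator}. A direct computation gives
\begin{equation*}
\frac{\log t-\hat\mu_i}{\hat\sigma_i}
=\frac{(\log t-\bar x_i)\kappa_{i2}}{s_i}+\kappa_{i1}
=\frac{\bigl((\log t-\mu_i)/\sigma_i-\bar Z_{n_i}\bigr)\kappa_{i2}}{M_{n_i}}+\kappa_{i1},
\end{equation*}
where the first equality uses $\hat\mu_i=\bar x_i-\kappa_{i1}\hat\sigma_i$ and $\hat\sigma_i=s_i/\kappa_{i2}$, and in the second both $\mu_i$ and $\sigma_i$ cancel except through the single combination $(\log t-\mu_i)/\sigma_i$.

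Second, I invoke the definition of reliability in \eqref{equ:reliability}: since $r_i(t)=1-F_i((\log t-\mu_i)/\sigma_i)$, inverting $F_i$ yields the quantile identity $(\log t-\mu_i)/\sigma_i=F_i^{-1}(1-r_i(t))$. Substituting this into the display reproduces exactly the bracketed expression of \eqref{equ:MME}, so that $\hat r_i(t)=1-F_i[\,\cdot\,]$ takes the claimed form. The concluding step is then to read off the dependence structure: the right-hand side of \eqref{equ:MME} is an explicit function of (i) the true reliability $r_i(t)$, entering only through $F_i^{-1}(1-r_i(t))$; (ii) the baseline quantities $F_i$, $F_i^{-1}$, $\kappa_{i1}$, $\kappa_{i2}$ attached to the standard member of the family; and (iii) the random pair $(\bar Z_{n_i},M_{n_i})$. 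Because the latter is a function of the $z_{i,j}\sim F_i$ alone, its distribution is ancillary to $(\mu_i,\sigma_i)$, and hence the law of $\hat r_i(t)$ depends on the parameters solely through $r_i(t)$ together with the fixed cdf $F_i$.

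I do not anticipate a substantive obstacle: the statement is essentially a pivotal-quantity identity, and once the location–scale reparametrization is in place the argument is purely algebraic. The only point demanding care is the bookkeeping of the notation—ensuring $\bar Z_{n_i}$ and $M_{n_i}$ are interpreted as the \emph{standardized} sample moments (equivalently, as the sample moments of the $z_{i,j}$), so that their joint distribution is genuinely free of $(\mu_i,\sigma_i)$. This ancillarity is precisely the fact that drives the cancellation in the first display and should be stated explicitly rather than left implicit.
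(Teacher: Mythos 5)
Your proposal is correct and follows essentially the same route as the paper's proof: standardize via $z_{i,j}=(x_{i,j}-\mu_i)/\sigma_i$ so that $\bar x_i=\mu_i+\sigma_i\bar Z_{n_i}$ and $s_i=\sigma_i M_{n_i}$, substitute into the moment estimators, and invoke $F_i^{-1}(1-r_i(t))=(\log t-\mu_i)/\sigma_i$ to obtain the pivotal form. Your explicit remark that $\bar Z_{n_i}$ and $M_{n_i}$ must be read as sample moments of the \emph{standardized} observations is exactly the reading the paper's proof adopts, so no gap remains.
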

The proof is provided in the supplementary materials.
Note that assuming the component lifetime distribution belongs to the log-location-scale family is widely accepted, as a wide range of lifetime distributions are encompassed by this family. 
Several concrete examples will be given below.

\begin{remark}
The following three types of lifetime distributions belong to the log-location-scale family.
Thus, we could express the moment estimation for reliability as a single-parameter random variable. \\
\textbf{Weibull} type component, i.e., $X_i = \log T_i$ follows the distribution $F_i((x - \mu_i)/\sigma_i)$ with $F_i(x) = F_e(x)$ is the extreme distribution function. The population mean and variance are $\kappa_{i1} = -\gamma, \kappa_{i2} = \pi/\sqrt{6}$, where $\gamma = 0.5772...$ is the Euler's constant.
The moment-based estimator is
\begin{equation*}
\hat{r}_i(t) \sim  \exp\left( -\exp\left( \frac{\log (-\log r_i(t)) - \bar{Z}_{n_i}}{M_{n_i}} \cdot \frac{\pi}{\sqrt{6}} - \gamma \right) \right)
\end{equation*} \\
\textbf{Log-normal} type component, i.e., $X_i = \log T_i$ follows the normal distribution with c.d.f $\Phi((x-\mu_i)/\sigma_i)$, where $\Phi(x)$ is the c.d.f of the standard normal distribution. We have $\kappa_{i1} = 0, \kappa_{i2} = 1$. 
The moment-based estimator is
\begin{equation*}
\hat{r}_i(t) \sim 1 - \Phi \left\{ (\Phi^{-1} \left( 1 - r_i(t)\right) - \bar{Z}_{n_i})/M_{n_i}  \right\},
\end{equation*}
where $\bar{Z}_{n_i} / M_{n_i}$ follows a Student's t-distribution $t(n_i-1)$ and $(n_i-1)M^2_{n_i}$ follows a Chi-squared distribution $\chi^2(n_i-1)$.\\
\textbf{Exponential} type component with distribution function $P(T_i \leq t) = 1 - \text{e}^{-\lambda_i t}$ is a special case where the moment estimate for $\lambda_i$ can be easily obtained from the fact $\mathbb{E}(t_{i1}) = 1/\lambda_i$, which is $\hat{\lambda}_i = n_i / (\sum_{j=1}^{n_i} t_{i,j})$. The corresponding moment-based estimator is
\begin{equation*}
    \hat{r}_i(t) \sim \exp\{\log r_i(t) / M_{n_i}\},
\end{equation*}
where $M_{n_i}$ follows the Gamma distribution $\Gamma(n_i,n_i)$. 
\end{remark}

\subsection{The double bootstrap percentile with transformed resamples for log-location-scale family}\label{subsec:dbp}

In this section, we introduce our modified double bootstrap percentile method for SRA, termed the double bootstrap percentile with transformed resamples. 
We adopt the common assumption in SRA that component lifetimes follow log-location-scale distributions. 
Within this framework, the estimate $\hat{r}_i(t)$ now specifically denotes the moment-based estimate obtained through Eqs.~\eqref{equ:mu and sigma} and \eqref{equ:estimator}. 

To improve the computational efficiency, we would not generate the second layer bootstrap sample $\hat{r}_{i,j}^*$ and $\hat{r}_{i,j,l}^{**}$ through resampling of $\mathcal{X}_{i,j}^*$ and $\mathcal{X}_{i,j,l}^{**}$.
We discover that those can be computed directly from $ \bar{Z}_{n_i}^{*j}, M_{n_i}^{*j} $ and $ \bar{Z}_{n_i}^{**l}, M_{n_i}^{**l} $ via \eqref{equ:MME}. 
Algorithm~\ref{al:type-generating} details the generation of distribution-specific statistics $\bar{Z}_{n_i}, M_{n_i}$ for log-normal, exponential and general log-location-scale components. 
The general log-location-scale case requires the standardized cumulative distribution function as an additional input. For instance, the standardized CDF for the Weibull case is $F_e(x)$.

Embedding this modification to the double bootstrap percentile method, we propose the \textit{double bootstrap percentile method with transformed resamples} (DBPT), which is given in Algorithm~\ref{al:dbp}. 
Focusing on a simple case where the system consists of only one component, Fig.~\ref{fig:tik} provides a concrete representation and comparative visualization of DBPT against DBP. 
Our approach achieves computational improvements over DBP methods through two key modifications. 
First, in steps $10$ and $18$ of Algorithm~\ref{al:dbp}, we compute $\hat{r}_{i,j}^*$ and $\hat{r}_{i,j,l}^{**}$ via a shortcut formula based the sampling of $\bar{Z}_{n_i}^{*j}, M_{n_i}^{*j}$ and $\bar{Z}_{n_i}^{**l}, M_{n_i}^{**l}$. 
This reduces the computation cost from $O(snBC)$ in DBP to $O(sBC)$.
Second, in step 14, we generate one set of $\bar{Z}_{n_i}^{**l}, M_{n_i}^{**l}$ for each $i, l$ and reuse to compute $\hat{r}_{i,j,l}^{**}$ across $j = 1,\ldots, B$ in step 18. 
This is actually equivalent with transforming the resamples $\mathcal{X}_{i,1,l}^{**}$ to generate $\mathcal{X}_{i,j,l}^{**}$ for $j=2,\ldots,B$ in conventional DBP method.
The computation cost is $O(sC + sBC)$ compared to DBP's $O(snBC)$. 
Overall, our method achieves a total computational cost of $O(s(B+C)+sBC)$, outperforming the $O(snBC)$ of DBP methods. 

\begin{algorithm}
\caption{Generating $\bar{Z}_{n_i}$ and $M_{n_i}$ for different lifetime distributions}
\label{al:type-generating}
\begin{algorithmic}[1]
\State \textbf{Input: } Sample size $n_i$ and component type (either log-normal, exponential, or general log-location-scale, the latter requiring its standardized CDF $F_i(x)$ as additional input).
\State \textbf{Output: } auxiliary statistics $ \bar{Z}_{n_i} $, $ M_{n_i} $.

\If{component is \textbf{log-normal type}}
    \State Independently generate $ T_{n_i} \sim t(n_i-1) $ and  $ U_{n_i} \sim \chi^2(n_i-1) $; 
    \State $ M_{n_i} \gets \{(n_i-1) /U_{n_i} \}^{1/2}$;
    \State $\bar{Z}_{n_i} \gets T_{n_i} \cdot M_{n_i}$;
\ElsIf{Component is \textbf{Exponential type}}
    \State Generate $M_{n_i}$ from Gamma distribution $\Gamma(n_i, 1/n_i)$ and $\bar{Z}_{n_i} = 0$;
\ElsIf{component is \textbf{general log-location-scale type}}
    \State Independently draw samples $ X_1, X_2, \ldots, X_{n_i} \sim F_i(x) $;
    \State
    $ \bar{Z}_{n_i} \gets n_i^{-1} \sum_{j=1}^{n_i} x_{j}$; 
    \State $ M_{n_i} \gets \{(n_i-1)^{-1} \sum_{j=1}^{n_i} (x_j - \bar{Z}_{n_i})^2\}^{1/2}$;
\EndIf
\end{algorithmic}
\end{algorithm}

\begin{algorithm}
\caption{Double Bootstrap percentile with transformed resamples}
\label{al:dbp}
\begin{algorithmic}[1]
\State \textbf{Input:} time $t$, confidence level $1-\alpha$, resampling times $B, C$, system structure $\psi$, the types of components and data $\cup_{i=1}^s\mathcal{X}_i$.
\State \textbf{Output:} Lower confidence limit $R_{DBPT}(t)$.
\For{$i = 1$ \textbf{to} $s$} \Comment{Moment based estimates}
    \State Compute the $\hat{r}_i(t)$ based on $\mathcal{X}_i$ through \eqref{equ:mu and sigma} and \eqref{equ:estimator}. 
\EndFor
\State 
$\hat{R}(t) \gets \psi (\hat{r}_1,\ldots,\hat{r}_s)$.
\State Generate $\bar{Z}_{n_i}^{*j}$ and $M_{n_i}^{*j}$ for $i=1,\ldots,s$ and $j=1,\ldots,B$ independently by Algorithm~\ref{al:type-generating}.

\For{$j = 1$ \textbf{to} $B$} 
\Comment{First layer}
        \For{$i = 1$ \textbf{to} $s$} 
        \State $\hat{r}_{i,j}^*(t) \gets  1 - F_i \left[ \left\{F_i^{-1}(1 - \hat{r}_i(t)) -  \bar{Z}_{n_i}^{*j}\right\} \kappa_{i2} / M_{n_i}^{*j} + \kappa_{i1} \right].$
        \EndFor
        \State $\hat{R}_j^*(t) \gets \psi(\hat{r}_{1,j}^*(t), \ldots, \hat{r}_{s,j}^*(t))$.
\EndFor
\State Generate $\bar{Z}_{n_i}^{**l}, M_{n_i}^{**l}$ for $i=1,\ldots,s$ and $l=1,\ldots,C$ independently by Algorithm~\ref{al:type-generating}.
\For{$j = 1$ \textbf{to} $B$}  \Comment{Second layer}
    \For{$l = 1$ \textbf{to} $C$}
        \For{$i = 1$ \textbf{to} $s$}
        \State $\hat{r}_{i,j,l}^{**}(t) \gets  1 - F_i \left[ \left\{F_i^{-1}(1 - \hat{r}_{i,j}^*(t)) -  \bar{Z}_{n_i}^{**l}\right\} \kappa_{i2} / M_{n_i}^{**l} + \kappa_{i1} \right].$
        \EndFor
    \State $\hat{R}_{j,l}^{**}(t) \gets \psi(\hat{r}_{1,j,l}^{**}(t), \ldots, \hat{r}_{s,j,l}^{**}(t))$.
\EndFor
\State $\hat{u}_j \gets C^{-1} \sum_{l=1}^C I\left(\hat{R}_{j,l}^{**} \leq \hat{R} \right)$.
\EndFor
\State $R_{DBPT}(t) \gets \hat{R}^*_{(k')}(t)$ where $k = \lceil B\alpha \rceil$ and $k' = \lceil B \hat{u}_{(k)} \rceil$.
\end{algorithmic}
\end{algorithm}

The computational savings of our method come at the cost of introducing dependence among second-layer bootstrap estimates $\hat{r}_{i,j,l}^{**}$. 
Fortunately, this dependence does not affect the asymptotic accuracy of the proposed method, as shown in the following theorem.

\begin{figure}[ht]
\begin{minipage}[b]{0.45\linewidth}
\centering
 \begin{tikzpicture}[scale=0.68, transform shape]
    \node[left] at (0.4,4) {$r_1(t)$};
    \draw [ ->] (0,3.5)--(0,2.5);
    \node[left] at (0.35,2) {$\mathcal{X}_1$};
    \draw [ ->] (0,1.5)--(0,0.5);
    \node[left] at (0.4,0) {$\hat{r}_1(t)$};

    \draw [ ->] (0,-0.5)--(-2,-1.5);   
    \node[left] at (-2, -2) {$\mathcal{X}_{11}^{*}$};
    \draw[->] (-2.5,-2.5)--(-3,-3.5);
    \node[left] at (-2.8,-4) {$\hat{r}^*_{11}$};
    \draw [ ->] (-3.3,-4.5)--(-4,-5.5);
    \node[left] at (-4,-6) {$\mathcal{X}_{111}^{**}$}; 
    \draw[->] (-4.4,-6.5)--(-4.4,-7.5);
    \node[left] at (-4,-8) {$\hat{r}_{111}^{**}$};
    
    \draw [ ->] (-3.3,-4.5)--(-3.3,-5.5);
    \node[left] at (-3,-6) {$\ldots$}; 
    \draw[->] (-3.3,-6.5)--(-3.3,-7.5);
    \node[left] at (-3,-8) {$\ldots$};
    \draw [ ->] (-3.3,-4.5)--(-2.6,-5.5);
    \node[left] at (-1.8,-6) {$\mathcal{X}_{11C}^{**}$}; 
    \draw[->] (-2.2,-6.5)--(-2.2,-7.5);
    \node[left] at (-1.8,-8) {$\hat{r}_{11C}^{**}$};

    \draw [ ->] (0,-0.5)--(0,-1.5);  
    \node[left] at (0.35, -2) {$\ldots$};
    \draw [ ->] (0,-2.5)--(0,-3.5);
    \node[left] at (0.35,-4) {$\ldots$};
    \draw [ ->] (0,-4.5)--(0,-5.5);
    \node[left] at (0.35,-6) {$\ldots$};
    \draw[->] (0,-6.5)--(0,-7.5);
    \node[left] at (0.35,-8) {$\ldots$};

    \draw [ ->] (0,-0.5)--(2,-1.5);   
    \node[left] at (2.6, -2) {$\mathcal{X}_{1B}^{*}$};
    \draw[->] (2.5,-2.5)--(3,-3.5);
    \node[left] at (3.6,-4) {$\hat{r}^*_{1B}$};
    \draw [ ->] (3.3,-4.5)--(4,-5.5);
    \node[left] at (5,-6) {$\mathcal{X}_{1BC}^{**}$}; 
    \draw[->] (4.4,-6.5)--(4.4,-7.5);
    \node[left] at (5,-8) {$\hat{r}_{1BC}^{**}$};
    
    \draw [ ->] (3.3,-4.5)--(3.3,-5.5);
    \node[left] at (3.6,-6) {$\ldots$}; 
    \draw[->] (3.3,-6.5)--(3.3,-7.5);
    \node[left] at (3.6,-8) {$\ldots$};
    \draw [ ->] (3.3,-4.5)--(2.6,-5.5);
    \node[left] at (2.7,-6) {$\mathcal{X}_{1B1}^{**}$}; 
    \draw[->] (2.2,-6.5)--(2.2,-7.5);
    \node[left] at (2.7,-8) {$\hat{r}_{1B1}^{**}$};

    \node[left,text width=2cm] at (-4.5,0) {First Layer:};
    \node[left,text width=3cm] at (-4.5,-4) {Second Layer:};
\end{tikzpicture}
\end{minipage}
\hfill
\begin{minipage}[b]{0.4\linewidth}
\centering
    \begin{tikzpicture}[scale=0.68, transform shape] 
    \node[left] at (0.35,4) {$r_1(t)$};
    \draw [ ->] (0,3.5)--(0,2.5);
    \node[left] at (0.35,2) {$\mathcal{X}_1$};
    \draw [ ->] (0,1.5)--(0,0.5);
    \node[left] at (0.35,0) {$\hat{r}_1(t)$};

    \draw [ ->] (0,-0.5)--(-2,-1.5);   
    \node[left] at (-1.6, -2) {$\bar{Z}^{*1}_{n_1},M^{*1}_{n_1}$};
    \draw[->] (-2.5,-2.5)--(-3,-3.5);
    \node[left] at (-2.8,-4) {$\hat{r}^*_{11}$};
    \draw[->] (-3.2,-4.5)--(-0.5,-5.5);
 
    \draw [ ->] (0,-0.5)--(0,-1.5);  
    \node[left] at (0.35, -2) {$\ldots$};
    \draw [ ->] (0,-2.5)--(0,-3.5);
    \node[left] at (0.35,-4) {$\ldots$};
    \draw [ ->] (0,-4.5)--(0,-5.5);
    \node[left] at (2.3,-6) {$\bar{Z}^{**l}_{n_1},M^{**l}_{n_1},l=1,\ldots,C$};

    \draw [ ->] (0,-0.5)--(2,-1.5);   
    \node[left] at (3.4, -2) {$\bar{Z}^{*B}_{n_1},M^{*B}_{n_1}$};
    \draw[->] (2.5,-2.5)--(3,-3.5);
    \node[left] at (3.4,-4) {$\hat{r}^*_{1B}$};
    \draw[->] (3,-4.5)--(0.5,-5.5);

    \draw[->] (0,-6.5)--(0,-7.5);
    \node[left] at (0.35,-8) {$\ldots$};

    \draw[->] (0.5,-6.5)--(3,-7.5);
    \node[left] at (4.5,-8) {$\hat{r}_{1B1}^{**},\ldots,\hat{r}_{1BC}^{**}$};
    
    \draw[->] (-0.5,-6.5)--(-3.2,-7.5);
    \node[left] at (-2.1,-8) {$\hat{r}_{111}^{**},\ldots,\hat{r}_{11C}^{**}$};
\end{tikzpicture}
\end{minipage}
\caption{Illustration of the conventional double bootstrap percentile method (left) compared with our double bootstrap percentile method with transformed resamples (right) applied for assessing the reliability of the first component, $r_1(t)$.}
\label{fig:tik}
\end{figure}
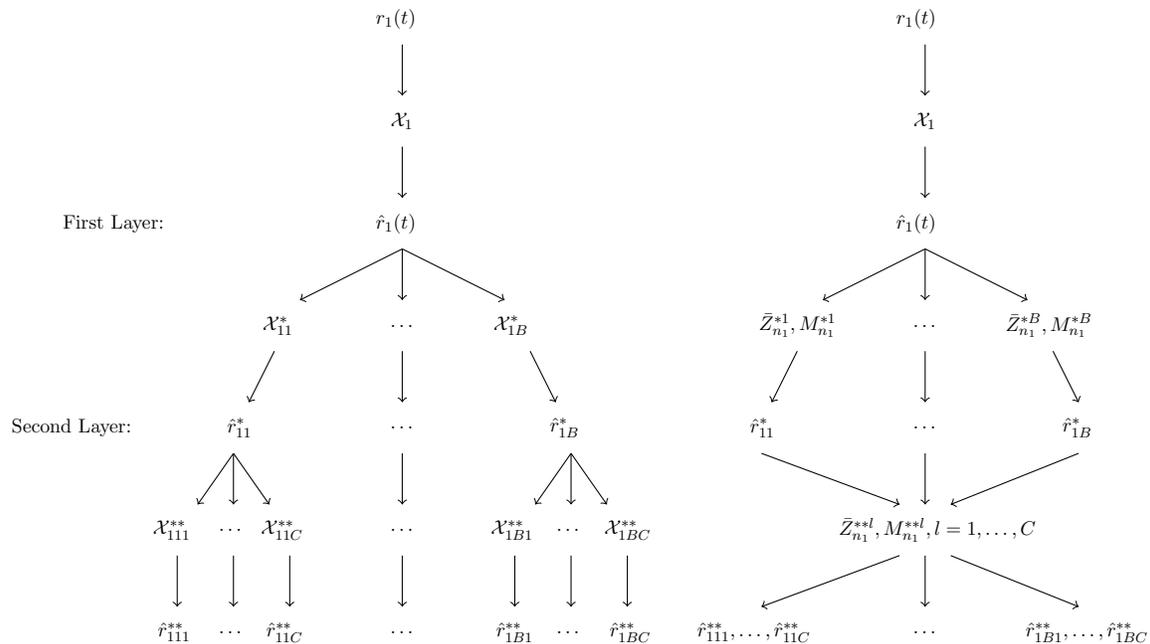

\begin{theorem}
\label{thm:order of LCL}
Assume the distribution $G_{\theta_i}(t)$ is continuous and exists at least fourth-order moments for $i=1,\ldots,s$. 
The lower confidence limit obtained by Algorithm~\ref{al:dbp}, $R_{DBPT}(t)$, is within $[0,1]$ and the coverage level is
\begin{equation}
\text{P} \left( R(t) \geq R_{DBPT} \right) = 1- \alpha + O(n^{-1}).
\end{equation}
\end{theorem}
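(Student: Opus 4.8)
The plan is to show that $R_{DBPT}$ agrees with the idealized ($B,C\to\infty$) double bootstrap percentile limit up to an error negligible at order $n^{-1}$, and then to borrow the second-order accuracy already established for the conventional procedure in Theorem~\ref{thm:order of DBP}; the argument splits into a range claim, a prepivoting/Edgeworth analysis of the idealized limit under moment-based estimation, and control of the Monte Carlo and dependence errors introduced by the transformed resampling. The containment $R_{DBPT}\in[0,1]$ is immediate: by Step~10 of Algorithm~\ref{al:dbp} each $\hat r_{i,j}^*(t)=1-F_i(\cdot)$ lies in $[0,1]$, and since $\psi$ maps $[0,1]^s$ into $[0,1]$ for a coherent system, every first-layer value $\hat R_j^*(t)=\psi(\hat r_{1,j}^*,\dots,\hat r_{s,j}^*)$ lies in $[0,1]$; as $R_{DBPT}=\hat R^*_{(k')}(t)$ is one of these order statistics, the claim follows.

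For the idealized limit I would exploit the pivotal structure of Theorem~\ref{thm:mme}: because the law of the moment estimator $\hat r_i(t)$ depends on $(\mu_i,\sigma_i)$ only through $r_i(t)$, the first-layer variable $\hat r_{i,j}^*$ is distributed exactly as $\hat r_i$ with true reliability $\hat r_i(t)$, and $\hat r_{i,j,l}^{**}$ as $\hat r_i$ with true reliability $\hat r_{i,j}^*$, so the transformed resampling reproduces the sampling mechanism exactly. Writing $S_n=\sqrt n(\hat R-R)$ as a smooth function of the sample moments $(\bar x_i,s_i^2)$, the fourth-moment assumption yields a two-term Edgeworth expansion for $S_n$ and a matching one for its bootstrap analogue whose coefficients are estimated at rate $O_p(n^{-1/2})$. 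Using the prepivoting identity $C(\alpha)=\text{P}(U(R)\geq\alpha)$ from Section~\ref{subsec:double_boot}, the ideal level is $\tilde\alpha=H^{-1}(\alpha)$ with $H$ the distribution function of $U(R)$; the second layer estimates $H$ through the law of the conditional coverage $\beta_j=U_j^*(\hat R)$, where $U_j^*$ is the CDF of the second-layer estimate given the $j$th first-layer resample. Matching the Cornish--Fisher terms shows that calibration at $\tilde\alpha$ cancels the $O(n^{-1/2})$ skewness term and gives $C(\tilde\alpha)=1-\alpha+O(n^{-1})$, exactly as in the proof of Theorem~\ref{thm:order of DBP}; the sole change there is the substitution of the moment estimator for the MLE, which preserves the expansion structure.

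It then remains to control the errors from finite $B,C$ and, crucially, from reusing the second-layer auxiliary statistics. The key observation is that, conditional on the shared draws $\{(\bar Z_{n_i}^{**l},M_{n_i}^{**l})\}_{i,l}$, each $\hat u_j$ is a function of the mutually independent first-layer statistics $(\bar Z_{n_i}^{*j},M_{n_i}^{*j})_i$ alone, so the $\hat u_j$ are conditionally independent across $j$. Moreover, for fixed $j$ the values $\hat R_{j,l}^{**}$, $l=1,\dots,C$, are i.i.d.\ with CDF $U_j^*$, whence $\hat u_j\to\beta_j$ at rate $O_p(C^{-1/2})$ and the limit $\beta_j$ is measurable with respect to the first layer only. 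Thus the cross-$j$ dependence created by the reuse is annihilated as $C\to\infty$, the $\beta_j$ are i.i.d.\ across $j$, and the empirical distribution of $\{\hat u_j\}_{j=1}^B$ converges to $H$ just as in the conventional double bootstrap; the empirical quantile $\hat\alpha=\hat u_{(\lceil B\alpha\rceil)}$ therefore converges to $\tilde\alpha$, and the induced perturbation of the coverage is $o(n^{-1})$.

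The main obstacle is this last part: since all $\hat u_j$ share the same $C$ second-layer realizations, the sequence is not weakly dependent in any mixing sense, and a direct Glivenko--Cantelli or empirical-process statement for $\{\hat u_j\}$ is unavailable. The argument must instead condition on the shared draws, invoke the resulting conditional independence across $j$ together with the fact that the $C\to\infty$ limits $\beta_j$ depend on the first layer alone, and then show that the $O_p(C^{-1/2})$ fluctuations and their cross-$j$ correlations contribute only $o(n^{-1})$ after the quantile operation. Securing the uniformity needed to justify the simultaneous passage $B,C,n\to\infty$, rather than an iterated limit, is the delicate technical point.
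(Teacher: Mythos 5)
Your proposal is correct and follows essentially the same route as the paper: condition on the shared second-layer draws to recover conditional independence of the $\hat u_j$ across $j$, pass to the $C\to\infty$ limit $\beta_j=U_j^*(\hat R)$ for each $j$, show the empirical distribution of the $\hat u_j$ converges to that of $U_j^*(\hat R)$, and then invoke the prepivoting identity $C(x)=1-G(x)+O(n^{-1})$ of Beran for the idealized limit. The paper merely formalizes your two limiting steps with Newton's recycling lemma (for $C\to\infty$) and a third-moment bound plus Borel--Cantelli (for $B\to\infty$), and, like you, does not fully resolve the simultaneous $B,C,n\to\infty$ uniformity issue you flag.
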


The proof is provided in the supplementary materials.
In practical applications, the finite-sample performance of the double bootstrap percentile method depends on the choice of parameters $B$ and $C$. 
For SRA with $n \leq 10^2$, we recommend set $ B = 10^3 $ and $ C = 500 $.

In conclusion, our method shows significant advancements in accuracy and efficiency compared to the state-of-the-art methods.
Firstly, it achieves second-order asymptotic accuracy for the LCL, matching the theoretical precision of CF expansion-based approaches without requiring complex analytical derivations.
Second, compared to the conventional double bootstrap percentile method, we reduce the computational complexity from $O(snBC)$ to $O(s(B + C) + sBC)$.
Thirdly, the proposed method ensures that LCLs remain within the valid interval $[0,1]$, and further numerical results in Section~\ref{sec:numerical_studies} show that $R_{DBPT}$ rarely exhibits bend-back issues.

\section{Numerical Studies}
\label{sec:numerical_studies}

In this section, we evaluate the performance of our double bootstrap percentile method with transformed resampling through a series of numerical simulations. 
For comparison purposes, the following alternative methods are considered: 
\begin{itemize}
    \item The \textbf{Delta method} \citep{hong2014confidence}, a widely used first-order asymptotic approach; 
    \item The \textbf{R-WCF method} \citep{li2020higher}, which achieves higher-order convergence than the Delta method; 
    \item The \textbf{conventional bootstrap percentile method (BP)} as introduced in Section 3.2; 
    \item Our proposed \textbf{double bootstrap percentile method with transformed resamples (DBPT)}. 
\end{itemize}
Both complete and censored data scenarios are examined separately in Section \ref{subsec:complete} and \ref{subsec:censored}, respectively. 

The following evaluation metrics are employed:
\begin{itemize}
    \item The \textbf{empirical coverage probability} is a primary metric for evaluating confidence limits, with closer alignment to the nominal confidence level indicating better accuracy.  
    \item The \textbf{quantiles of the LCLs} obtained by repeated evaluation of the alternative methods—benchmarks for assessing the accuracy of LCLs as in \cite{li2020higher}. 
    This evaluation metric is also a widely used metric, with values closer to the true reliability indicating better performance.
\end{itemize}
Besides these quantitative evaluation metrics, we also assess how our method addresses the common “falling outside” and “bend-back” issues, which are frequently observed in other methods. 
Finally, we will compare the computational costs of our method to those of the conventional double bootstrap percentile method.
Note that throughout this section, we set the confidence level at $1-\alpha = 0.9$.

\subsection{Simulation for complete data}
\label{subsec:complete}
In our initial example, we consider a parallel system and a series system, each comprising three Weibull-type components. 
Such configurations are commonly encountered in reliability analyses: series systems often model competing-risk scenarios, while parallel systems represent redundant designs.
We set the true reliability to $ R = 0.9988 $ for the parallel system and $ R = 0.9548 $ for the series system, and we construct LCLs at a confidence level of $ 1-\alpha = 0.9 $. 
For both BP and DBPT, we set $B = 10^3$, and for DBPT, we set $C = 500$.
For simplicity, we assume identical sample sizes $ n_i = n $ for $ i = 1, 2, 3 $. 
However, our method is readily applicable to situations with unequal sample sizes without any modification. 
We repeat the experiment $ 10^4 $ times to assess the empirical coverage probability, and the $90\%$ quantile of the LCLs as the component sample size $n$ varies. 

As illustrated in Fig.~\ref{fig:two system}, the DBPT method achieves the most accurate coverage probabilities, precisely matching the nominal level once $n \geq 10$. 
The R-WCF method converges more rapidly to the nominal level than the Delta and BP methods, reflecting its second-order correctness, whereas both the Delta and BP methods exhibit only first-order correctness.
Moreover, the DBPT method’s $90\%$ quantile closely aligns with the true reliability, significantly outperforming the other methods for both the parallel and series systems.
    \begin{figure}[!t]
        \centering
        \includegraphics[width = 0.475\textwidth]{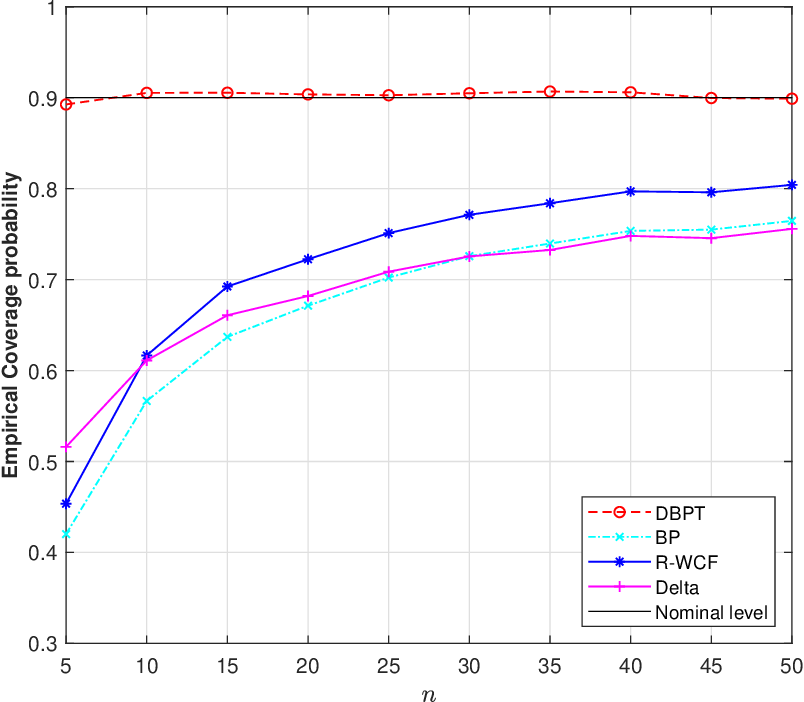}
        \hfill
        \includegraphics[width = 0.48\textwidth]{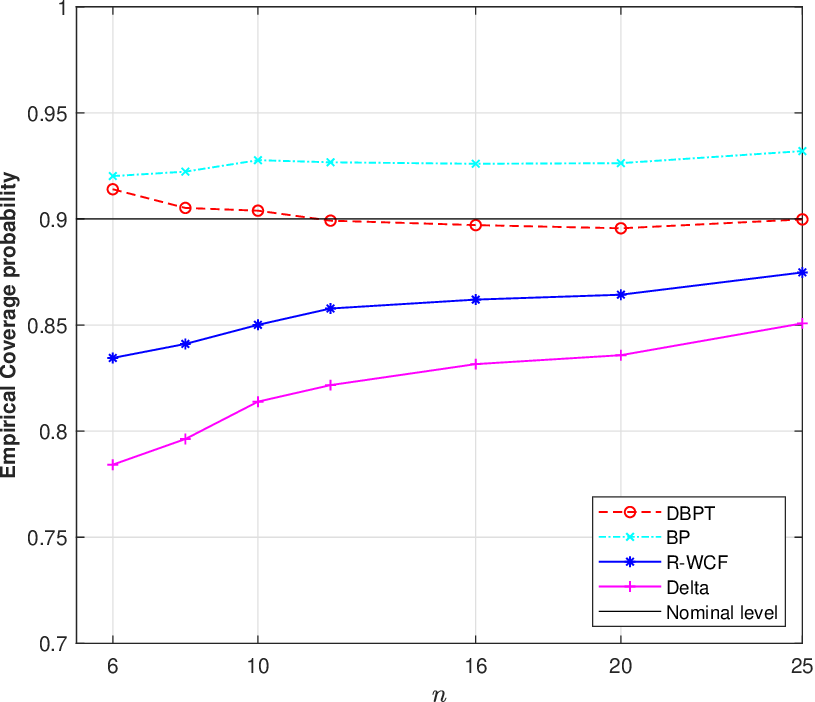}
        \\[0.8cm]
        \includegraphics[width = 0.48\textwidth]{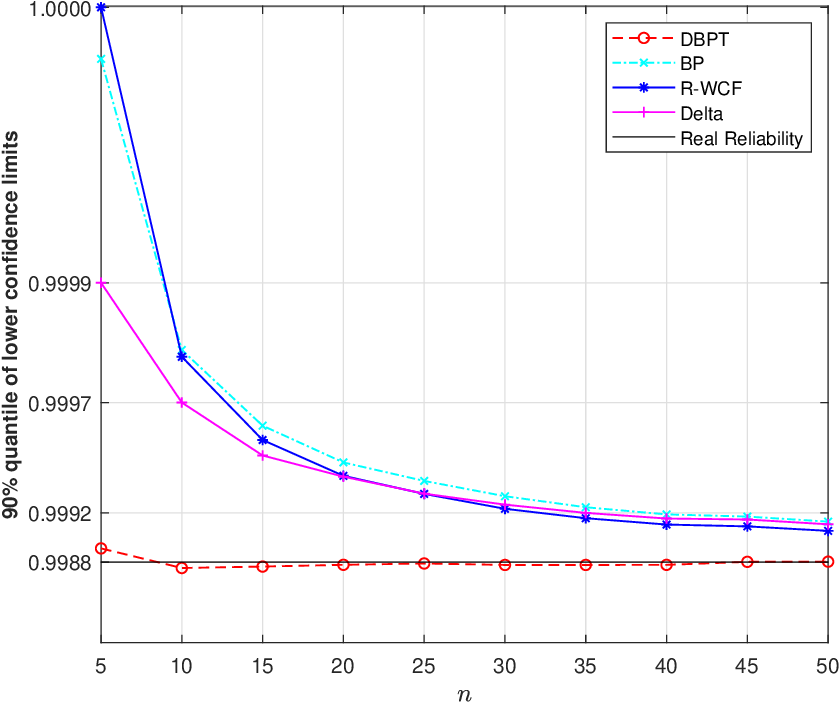}
        \hfill
        \includegraphics[width = 0.48\textwidth]{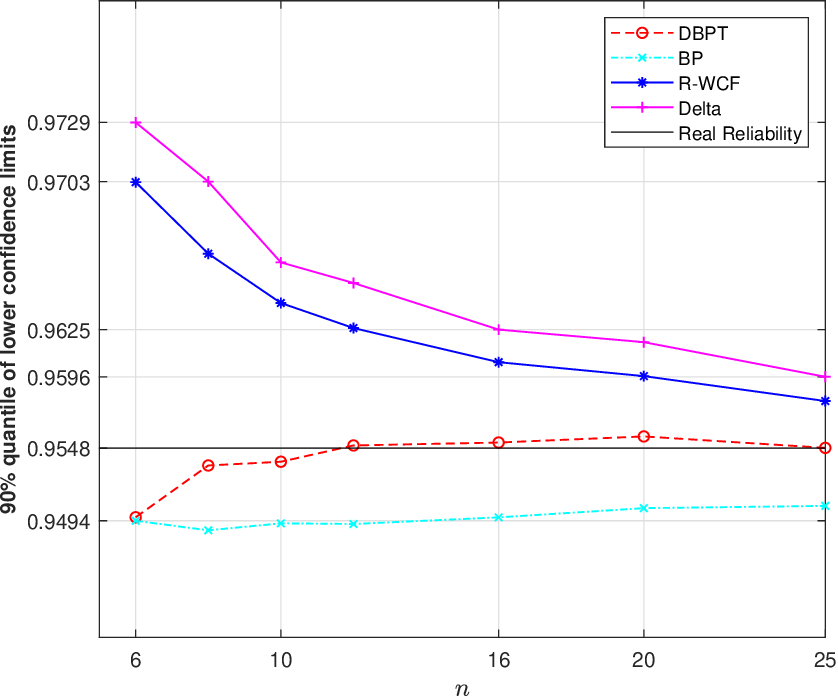}
        \caption{The empirical coverage probability (top)
        and the $90\%$ quantile (bottom) of lower confidence limits versus $n$ for a parallel (left) and a series (right) system with $s = 3$ independent Weibull components and $1-\alpha = 90\%$.}
        \label{fig:two system}
    \end{figure}

To further validate the accuracy of our method, we evaluate its performance on two more challenging systems. 
First, we consider a $2\times 2$ series-parallel system composed of $s=4$ log-normal components. 
Second, we consider a 5-out-of-8 system \citep{koutofn2017} composed of $s=8$ Weibull components, where the system is functional if at least 5 components are functional. 
As shown in Fig.~\ref{fig:other-system}, our method consistently outperforms other approaches in both scenarios.
These results demonstrate that our method performs robustly across different system structures and lifetime models. 
        \begin{figure}
        \centering
        \includegraphics[width = 0.47\textwidth]{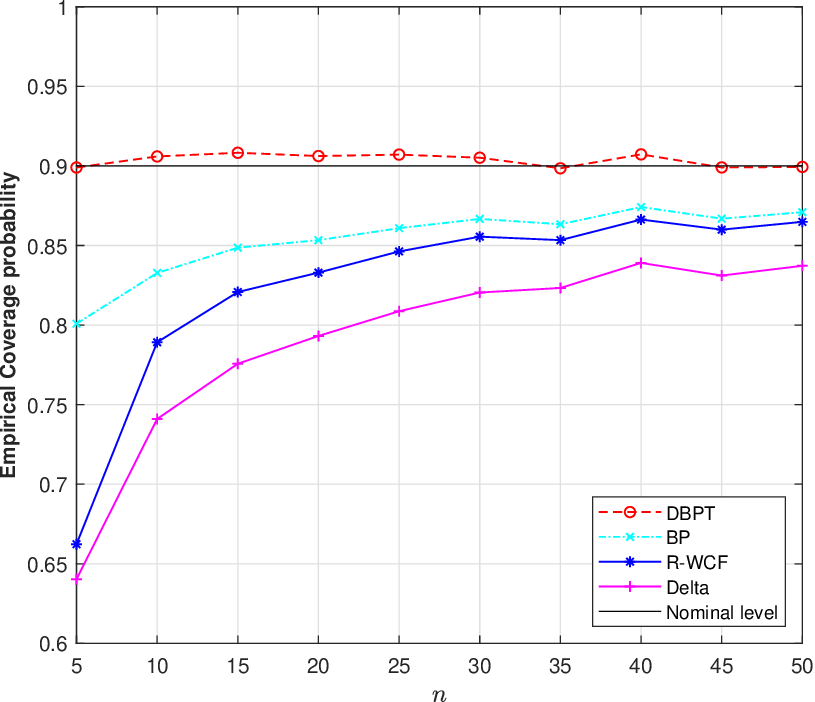}
        \hfill
        \includegraphics[width = 0.46\textwidth]{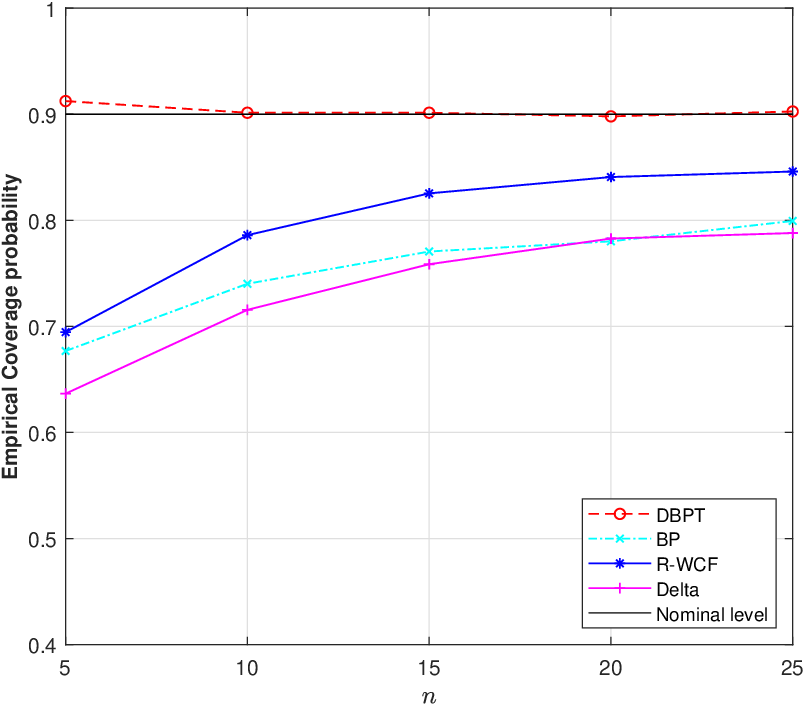}
        \\[0.6cm]
        \includegraphics[width = 0.47\textwidth]{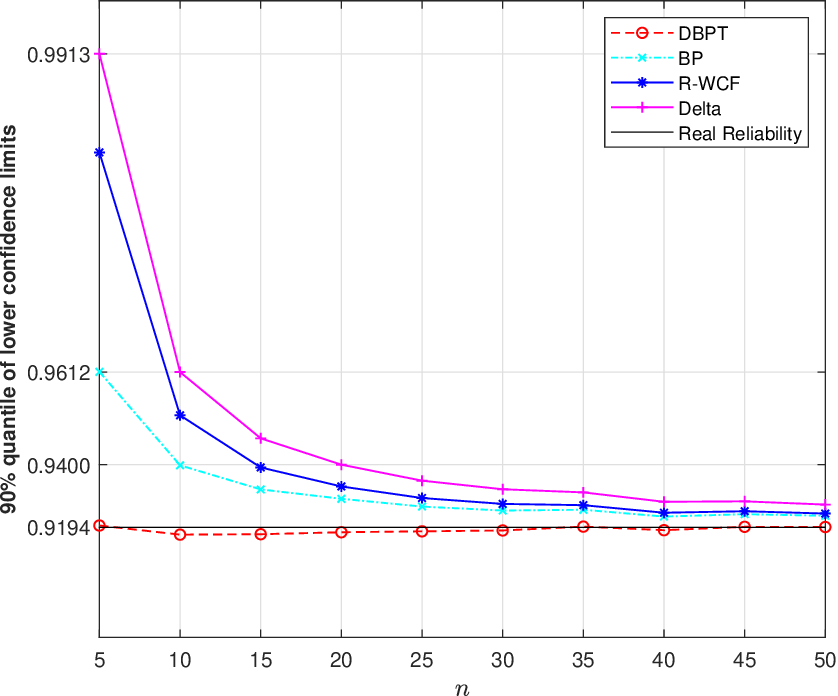}
        \hfill
        \includegraphics[width = 0.46\textwidth]{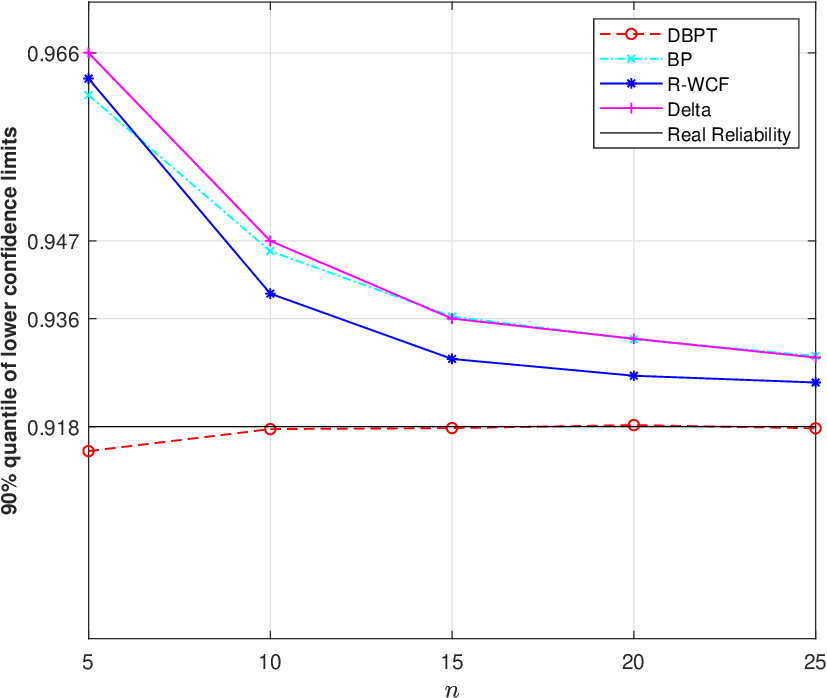}
        \caption{The empirical coverage probability (top)
        and the $90\%$ quantile (bottom) of lower confidence limits versus $n$ for a $2\times 2$ series-parallel system (left) and a 5-out-of-8 system (right).}
        \label{fig:other-system}
    \end{figure}

In the second example, we evaluate the performance of our method as the true reliability $ R $ varies. 
We let the $R$ range from $0.5$ to $0.999999$, with the confidence level of $1-\alpha = 0.9$. 
High-reliability assessment is always challenging in SRA, so we include some cases with $R$ close to 1 to challenge the proposed method. 
The analysis considers both a parallel system and a 5-out-of-8 system, each evaluated under two distinct sample sizes, $n = 10$ and $n=50$.
As illustrated in Fig.~\ref{fig:reliability}, our method remains stable across a range of $R$-values for both $n=10$ and $n=50$.
By contrast, when $ R $ approaches  $1$, the empirical coverage probabilities of other methods increasingly deviate from the confidence level. 
Although increasing the sample size to $n=50$ yields some improvements, it seems that this is still insufficient for alternative methods in high-reliability regimes as their coverage probability are far from the nominal confidence level.
The primary reason is that the variance of the point estimator grows rapidly in these regions, often pushing the confidence limits outside the interval $[0,1]$. 
This issue is notably severe for the R-WCF methods, which also frequently encounter numerical challenges.

\begin{figure}[!t]
        \centering
        \includegraphics[width = 0.48\textwidth]{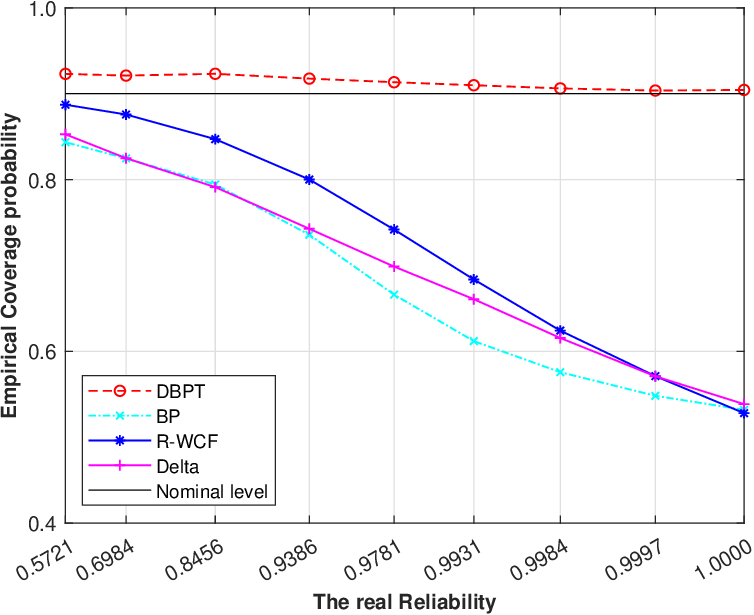}
        \hfill
        \includegraphics[width = 0.48\textwidth]{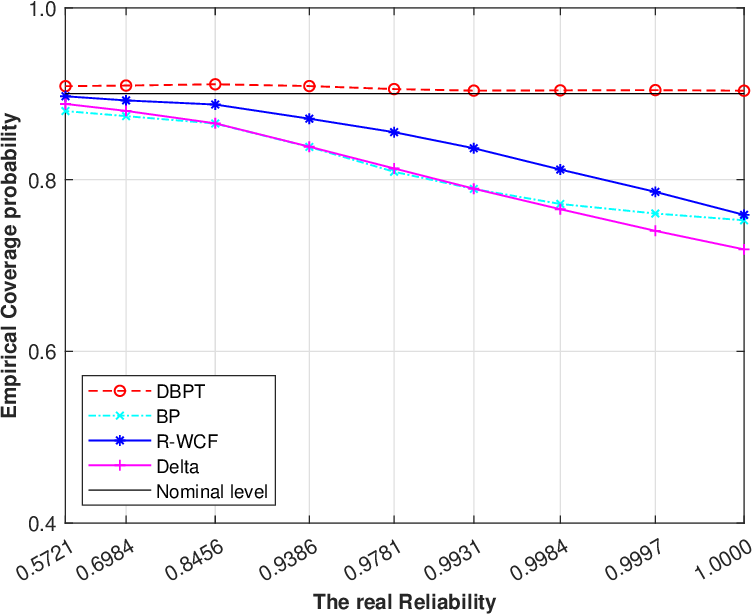}
        \\[0.8cm]
        \includegraphics[width = 0.48\textwidth]{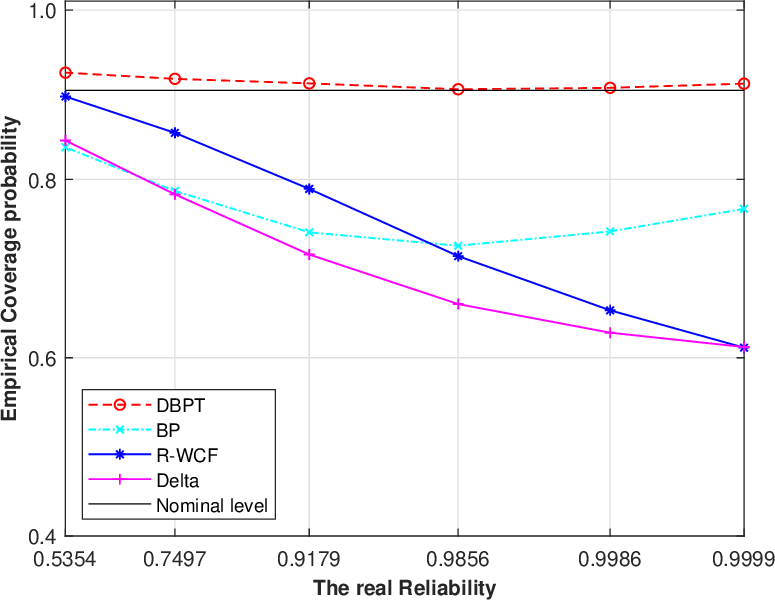}
        \hfill
        \includegraphics[width = 0.48\textwidth]{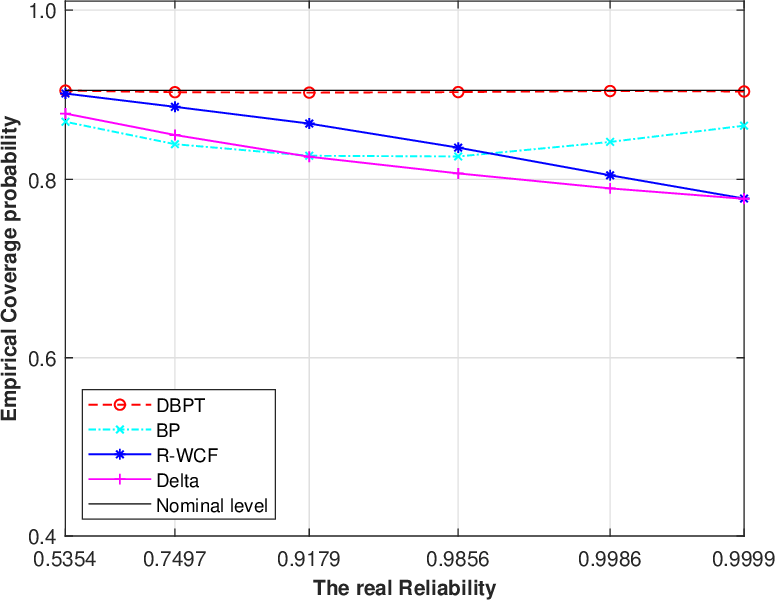}
        \caption{The empirical coverage rate versus the true reliability for a parallel (top) and a 5-out-of-8 (bottom) system with sample size $n=10$ (left) and $n=50$ (right) and $1 - \alpha = 90\%$.}
    \label{fig:reliability}
    \end{figure}

     \begin{table}
        \centering
        \resizebox{\textwidth}{!}{  
        \begin{tabular}{l|c|c|c|c}
            \textbf{Method} & \textbf{Error} & \textbf{Falling Outside} & \textbf{Bend Back} & \textbf{Numerical Stability} \\
            \hline
            Delta & $O(n^{-1/2})$ & 8545 & 1038 & \checkmark \\
            RWCF & $O(n^{-1})$ & 1714 & 906 & $\times$ (Ineffective for boundary)  \\
            BP & $O(n^{-1/2})$ & 0 & 0 & \checkmark \\
            DBPT & $O(n^{-1})$ & 0 & 20 & \checkmark \\
        \end{tabular}
        }
        \caption{Statistical methods evaluated for error rate, outliers, bend-back frequency, and numerical stability, where $n=5, \alpha =0.1$, repeated $10^4$ times.}
        \label{tab:comparison_results}
    \end{table}

We further examine the ability of avoiding “falling outside” and “bend-back” by quantifying the probabilities of these issues observed in simulation results.
After conducting $10^4$ experimental runs for a series system under $ n = 5 $ and a nominal confidence level of $0.9$, we recorded the number of times these issues arose, relating these occurrences to the theoretical appearance probability of the methods.
As shown in Table~\ref{tab:comparison_results}, the DBPT method yields LCLs that remain within the valid range and seldom exhibit bend-back behavior, with only $20$ occurrences in $10^4$ experiments. 
Although the BP method never experiences bend-back issues, it does not reach the precise coverage accuracy achieved by the DBPT method. In contrast, other methods frequently produce confidence limits that fall outside the allowed interval or display bend-back problems. Notably, the R-WCF method frequently encounters numerical issues when the true reliability is close to 0 or 1.

Regarding computational efficiency, we consider a 9-out-of-16 system, which contains 16 components, and is functional if at least 9 components operate.
The sample size for each component is $n=100$. 
Remarkably, a single experiment using our improved double bootstrap percentile method required only $0.2$ hours, while the conventional double bootstrap percentile method needed $78.8$ hours on the same computer.\footnote{It was performed on a laptop with an AMD Ryzen 7 5800H (8-cores) and 16 GB DDR4-3200MHz RAM.}
Moreover, the LCLs derived from both methods show excellent numerical agreement, matching to the first two decimal places.
By incorporating moment-based estimation and the transformed resamples technique, we significantly improve computational efficiency without compromising the accuracy of the confidence limits.
This advancement makes our method more practical for complex system reliability assessment.

Overall, our method exhibits the best performance for complete data, with the empirical coverage probability closely matching the nominal confidence level.
Moreover, it effectively avoids the issue of confidence limits falling outside the valid range and rarely encounters the bend-back issue. 
In addition, our method reduces runtime by orders of magnitude compared to traditional double bootstrap approaches.

\subsection{Simulation for censored data}
\label{subsec:censored}
In real-world applications, censored data is commonly encountered due to limited experimental resources \citep{meeker2022statistical,CensorSchneider,xiao2014study}.
We now conduct a simulation to evaluate the effectiveness of our proposed method in the presence of censored data. 
For consistency, we adopt the same parallel system configuration and parameter settings as detailed in Section~\ref{subsec:complete}.
We consider the Type II right-censoring mechanism \citep{CensorSchneider}, which is commonly used in reliability experiments and involves terminating testing after a predefined number of failures.
Let $ t_{(1)} \leq t_{(2)} \leq \dots \leq  t_{(n)} $ denote the ordered (potential) failure times. 
In this study, the experiment is terminated once $ \tilde{n} = \lceil 0.7n \rceil $ failures occur.  
The collected data consists pairs $(x_i,\delta_i)$ where $x_i = t_{(i)}, \delta_i = 0$ if $i \leq \tilde{n}$ and otherwise $x_i = t_{\tilde{n}}, \delta_i = 1$.
Here $\delta_i$ serves as the censoring indicator.

To handle the censored data, we apply the data imputation algorithm proposed by \cite{xiao2014study} to obtain the virtually complete data and then apply alternative methods (except the Delta method, which is capable of handling the censored data directly) to calculate the LCLs of system reliability.
For the delta method, we employ the EM algorithm to obtain the MLE from the censored data, and the Fisher's information is calculated from the likelihood of the censored data.

    \begin{figure}[!t]
        \centering
        \includegraphics[width = 0.46\textwidth]{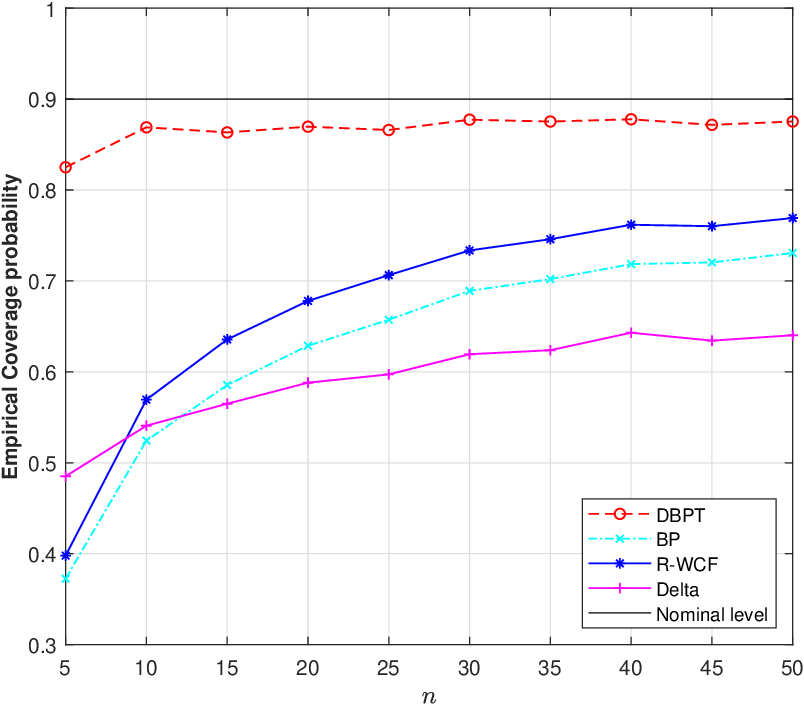}
        \hfill
        \includegraphics[width = 0.48\textwidth]{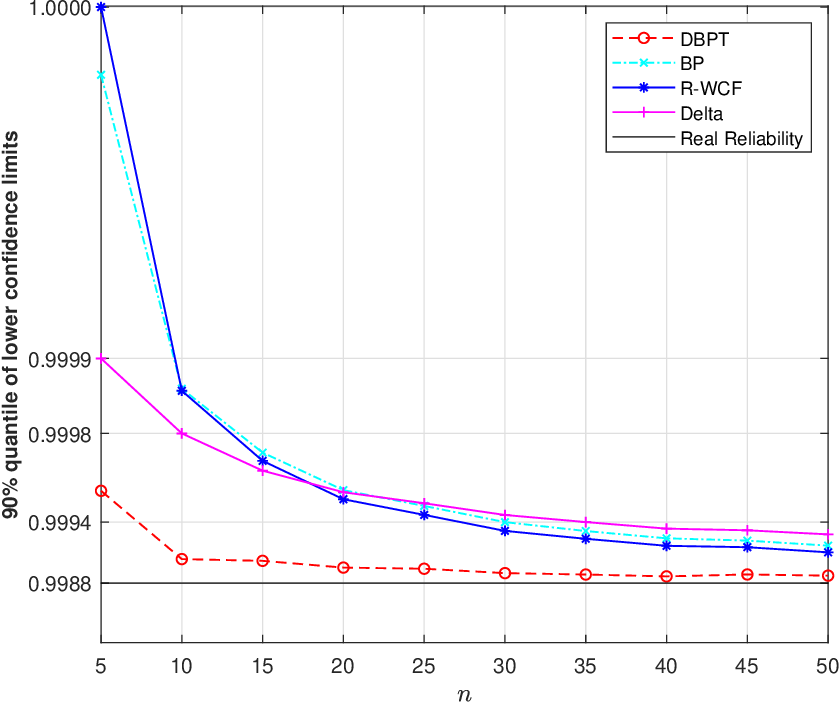}
        \hfill
        \caption{The empirical coverage probability (left) and the $90\%$ quantile (right) of lower confidence limits versus $n$ for a parallel system composed of $s = 3$ independent Weibull components with $30\%$ censored data and $1-\alpha = 90\%$.}
        \label{fig:PWcensor}
    \end{figure}

The results are shown in Fig.~\ref{fig:PWcensor}.
These results demonstrate significant advantages of the proposed method. 
Even with a small sample size, the empirical coverage probabilities are close to the nominal confidence level, and the $90\%$ quantiles of the LCLs align closely with the true reliability. 
Compared to the complete data scenario, the empirical coverage probability shows slight deviation from the nominal confidence level for all methods. 
This is primarily due to the loss of information caused by data censoring. 
Nonetheless, our method consistently outperforms the alternatives across all metrics in the censored data case.


\section{Conclusions}
\label{sec:conclusions}
In this paper, we propose a novel double bootstrap percentile framework for system reliability assessment. 
By leveraging moment-based estimators for log-location-scale distributions, we invent an innovative resampling strategy that eliminates the second-stage resampling and estimator recalibration required by conventional double bootstrap methods, thus reducing computational demands. 
Furthermore, the method is highly flexible, allowing it to effectively handle varying component data sizes and system structures. 
For censored data, a common challenge in system reliability analysis, we integrate an established data imputation algorithm, ensuring robust performance with incomplete observations.
Our approach achieves high-order, accurate lower confidence limits without complex mathematical expansion formulas. 
Numerical studies show that our method exhibits the best performance in terms of the coverage probability.
We also show that the proposal method can drastically reduce the appearance of the bend-back issue compared to other high-order asymptotic SRA methods. 

First, according to the simulation results, our method cannot completely eliminate the bend-back issue. 
This is a minor compromise made in order to achieve a substantial improvement in accuracy compared to the bootstrap percentile method. 
In the future, efforts will be made to fully address this issue.
Second, in some high-reliability scenarios, zero-failure component data may be encountered. 
How to modify our method to handle this situation will be the focus of our future work.

Moreover, two promising research directions will be considered in future work: first, extending the proposed method to handle degradation data for reliability assessment is a vital task in modern reliability engineering. 
Second, in many modern engineering problems, lifetime data are typically collected with numerous covariates.
Building reliability models, e.g., regression models, that incorporate important covariates is essential to improve the quality of the reliability assessment.
How to extend our proposed acceleration principle in regression analysis is an important future research topic.

\bibliographystyle{apalike}
\spacingset{1}
\bibliography{refs}

\begin{thebibliography}{}

\bibitem[Beran, 1987]{beran1987prepivoting}
Beran, R. (1987).
\newblock Prepivoting to reduce level error of confidence sets.
\newblock {\em Biometrika}, 74(3):457--468.

\bibitem[Booth and Presnell, 1998]{booth1998allocation}
Booth, J. and Presnell, B. (1998).
\newblock Allocation of {Monte} {C}arlo resources for the iterated bootstrap.
\newblock {\em Journal of Computational and Graphical Statistics}, 7(1):92--112.

\bibitem[Booth and Hall, 1994]{booth1994monte}
Booth, J.~G. and Hall, P. (1994).
\newblock {Monte} {C}arlo approximation and the iterated bootstrap.
\newblock {\em Biometrika}, 81(2):331--340.

\bibitem[Buehler, 1957]{buehler1957confidence}
Buehler, R.~J. (1957).
\newblock Confidence intervals for the product of two binomial parameters.
\newblock {\em Journal of the American Statistical Association}, 52(280):482--493.

\bibitem[Byun et~al., 2017]{koutofn2017}
Byun, J.-E., Noh, H.-M., and Song, J. (2017).
\newblock Reliability growth analysis of k-out-of-n systems using matrix-based system reliability method.
\newblock {\em Reliability Engineering \& System Safety}, 165:410--421.

\bibitem[Cai et~al., 2017]{cai2017wcf}
Cai, X., Tian, Y., Xu, H., and Wang, J. (2017).
\newblock {WCF} approach of reliability assessment for solid state power controller with accelerate degradation data.
\newblock {\em Communications in Statistics-Simulation and Computation}, 46(1):458--468.

\bibitem[Chang and Hall, 2015]{chang2015}
Chang, J. and Hall, P. (2015).
\newblock {Double-bootstrap methods that use a single double-bootstrap simulation}.
\newblock {\em Biometrika}, 102(1):203--214.

\bibitem[Davison and Hinkley, 1997]{davison1997bootstrap}
Davison, A.~C. and Hinkley, D.~V. (1997).
\newblock {\em Bootstrap methods and their application}.
\newblock Number~1. Cambridge university press.

\bibitem[DiCiccio and Efron, 1996]{diciccio1996bootstrap}
DiCiccio, T.~J. and Efron, B. (1996).
\newblock Bootstrap confidence intervals.
\newblock {\em Statistical science}, 11(3):189--228.

\bibitem[Du et~al., 2020]{du2020exact}
Du, S., Li, Z., Yu, D., Li, D., and Hu, Q. (2020).
\newblock Exact confidence limit for complex system reliability based on component test data.
\newblock {\em Quality Technology \& Quantitative Management}, 17(1):75--88.

\bibitem[Efron and Tibshirani, 1994]{efron1994introduction}
Efron, B. and Tibshirani, R.~J. (1994).
\newblock {\em An introduction to the bootstrap}.
\newblock Chapman and Hall/CRC.

\bibitem[Hall, 1986]{hall1986bootstrap}
Hall, P. (1986).
\newblock On the bootstrap and confidence intervals.
\newblock {\em The Annals of Statistics}, 14(4):1431--1452.

\bibitem[Henley and Kumamoto, 1981]{henley1981}
Henley, E.~J. and Kumamoto, H. (1981).
\newblock {\em Reliability Engineering and Risk Assessment}.
\newblock Prentice Hall.

\bibitem[Hong and Meeker, 2014]{hong2014confidence}
Hong, Y. and Meeker, W.~Q. (2014).
\newblock Confidence interval procedures for system reliability and applications to competing risks models.
\newblock {\em Lifetime data analysis}, 20(2):161--184.

\bibitem[Hong et~al., 2008]{Hong2007}
Hong, Y., Meeker, W.~Q., and Escobar, L.~A. (2008).
\newblock Avoiding problems with normal approximation confidence intervals for probabilities.
\newblock {\em Technometrics}, 50(1):64--68.

\bibitem[Li et~al., 2016]{li2016design}
Li, M., Zhang, W., Hu, Q., Guo, H., and Liu, J. (2016).
\newblock Design and risk evaluation of reliability demonstration test for hierarchical systems with multilevel information aggregation.
\newblock {\em IEEE Transactions on Reliability}, 66(1):135--147.

\bibitem[Li et~al., 2023]{li2023optimal}
Li, Z., Ng, H. K.~T., Yu, D., and Hu, Q. (2023).
\newblock Optimal acceptance sampling testing plan with pivotal quantity for log-location-scale distributions.
\newblock {\em IEEE Transactions on Reliability}, 72(4):1453--1465.

\bibitem[Li et~al., 2020]{li2020higher}
Li, Z., Yu, D., Liu, J., and Hu, Q. (2020).
\newblock Higher-order normal approximation approach for highly reliable system assessment.
\newblock {\em IISE Transactions}, 52(5):555--567.

\bibitem[Marks et~al., 2014]{marks2014applying}
Marks, C.~E., Glen, A.~G., Robinson, M.~W., and Leemis, L.~M. (2014).
\newblock Applying bootstrap methods to system reliability.
\newblock {\em The American Statistician}, 68(3):174--182.

\bibitem[Martin, 1990]{martin}
Martin, M.~A. (1990).
\newblock On bootstrap iteration for coverage correction in confidence intervals.
\newblock {\em Journal of the American Statistical Association}, 85(412):1105--1118.

\bibitem[Meeker et~al., 2022]{meeker2022statistical}
Meeker, W.~Q., Escobar, L.~A., and Pascual, F.~G. (2022).
\newblock {\em Statistical methods for reliability data}.
\newblock John Wiley \& Sons.

\bibitem[Newton and Geyer, 1994]{newton1994bootstrap}
Newton, M.~A. and Geyer, C.~J. (1994).
\newblock Bootstrap recycling: A {Monte} {C}arlo alternative to the nested bootstrap.
\newblock {\em Journal of the American Statistical Association}, 89(427):905--912.

\bibitem[Schneider and Weissfeld, 1986]{CensorSchneider}
Schneider, H. and Weissfeld, L. (1986).
\newblock Inference based on type {II} censored samples.
\newblock {\em Biometrics}, 42(3):531--536.

\bibitem[Shao, 2008]{shao2008mathematical}
Shao, J. (2008).
\newblock {\em Mathematical statistics}.
\newblock Springer Science \& Business Media.

\bibitem[Winterbottom, 1980]{winterbottom1980asymptotic}
Winterbottom, A. (1980).
\newblock Asymptotic expansions to improve large sample confidence intervals for system reliability.
\newblock {\em Biometrika}, 67(2):351--357.

\bibitem[Xiao et~al., 2014]{xiao2014study}
Xiao, X., Hu, Q., Yu, D., and Xie, M. (2014).
\newblock Study of an imputation algorithm for the analysis of interval-censored data.
\newblock {\em Journal of Statistical Computation and Simulation}, 84(3):477--490.

\bibitem[Xiong and Ding, 2017]{xiong2017confidence}
Xiong, W. and Ding, J. (2017).
\newblock Confidence intervals for a population proportion using group testing in the presence of misclassification.
\newblock {\em Communications in Statistics-Simulation and Computation}, 46(1):174--183.

\bibitem[Yu et~al., 2007]{yu2007statistical}
Yu, D., Li, X., Jiang, N., and Yang, J. (2007).
\newblock Some statistical inference problems and research progresses on reliability analysis of complex systems.
\newblock {\em Journal of System Science and Mathematics Science}, 27(1):68--81.

\end{thebibliography}
\appendix

\section{proofs}


\begin{lemma}\label{lem:psi}
For a coherent system consisting of $k$ components, the system reliability $R(t)$ can be represented by a multivariate polynomial w.r.t. component reliability $r_1,r_2,...,r_k$.
\end{lemma}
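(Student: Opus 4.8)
The plan is to reduce the statement to the elementary fact that any Boolean structure function admits a multilinear expansion, and then to pass from component states to component reliabilities by taking an expectation. First I would fix notation: for a coherent system let $\phi:\{0,1\}^k \to \{0,1\}$ denote the structure function acting on the binary state vector $\mathbf{x} = (x_1,\ldots,x_k)$, where $x_i = 1$ indicates that the $i$-th component is functional at the mission time $t$. By the modeling assumptions in Section~\ref{sec:background}, the component states are independent, each component state is a Bernoulli random variable $X_i$ with $\mathrm{P}(X_i = 1) = r_i(t)$, and the system is functional exactly when $\phi(X_1,\ldots,X_k) = 1$. Hence the system reliability is the expectation $R(t) = \mathbb{E}[\phi(X_1,\ldots,X_k)]$, and it suffices to show this expectation is a polynomial in $r_1,\ldots,r_k$.

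Next I would establish the key algebraic step: because every argument of $\phi$ is binary, $\phi$ agrees on all of $\{0,1\}^k$ with the multilinear interpolation polynomial
\begin{equation*}
\phi(\mathbf{x}) = \sum_{\mathbf{y}\in\{0,1\}^k} \phi(\mathbf{y}) \prod_{i:\,y_i=1} x_i \prod_{i:\,y_i=0}(1-x_i).
\end{equation*}
Expanding each factor $(1-x_i)$ and collecting terms—using that each $x_i$ appears at most to the first power after applying the identity $x_i^2 = x_i$ valid for $x_i \in \{0,1\}$—yields a squarefree representation
\begin{equation*}
\phi(\mathbf{x}) = \sum_{S\subseteq\{1,\ldots,k\}} c_S \prod_{i\in S} x_i,
\end{equation*}
which is a genuine multilinear polynomial in the binary states; equivalently one may reach this form by the pivotal (Shannon) decomposition $\phi(\mathbf{x}) = x_i\,\phi(\mathbf{x}\mid x_i=1) + (1-x_i)\,\phi(\mathbf{x}\mid x_i=0)$ applied successively to each coordinate.

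Finally I would take expectations term by term. By the independence of the $X_i$ and the Bernoulli identity $\mathbb{E}\big[\prod_{i\in S} X_i\big] = \prod_{i\in S}\mathbb{E}[X_i] = \prod_{i\in S} r_i(t)$, linearity of expectation gives
\begin{equation*}
R(t) = \mathbb{E}[\phi(X_1,\ldots,X_k)] = \sum_{S\subseteq\{1,\ldots,k\}} c_S \prod_{i\in S} r_i(t),
\end{equation*}
which is precisely a multivariate polynomial in $r_1,\ldots,r_s$ and coincides with the structure function $\psi$ introduced in Section~\ref{sec:background}. The argument presents no substantive obstacle, since the result is classical; the only points requiring care are the justification that a binary-input function equals its multilinear expansion everywhere on $\{0,1\}^k$ (existence of the representation suffices, uniqueness is not needed) and the clean use of independence to factor the expectation of each monomial. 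As a sanity check I would note that the three structure functions displayed in Section~\ref{sec:background}—the series, parallel, and $2\times 2$ series-parallel systems—are all explicit instances of this expansion.
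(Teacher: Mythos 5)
Your proof is correct and follows essentially the same route as the paper's: both enumerate the component-state vectors in $\{0,1\}^k$, weight each by its product-form probability (using independence), and sum over the working states to exhibit $R(t)$ as a polynomial in $r_1,\ldots,r_k$. Your write-up is in fact slightly more careful than the paper's, whose displayed formula drops the $(1-r_i(t))^{1-s_i}$ factors in the product, whereas your multilinear-interpolation expression keeps them explicit.
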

\begin{proof}
Without loss of generality, suppose $t$ is fixed. 
Let $S_i$ denote the state of failure of individual component $i$; $S_i=0$ represents the failure of the $i$-th component at time $t$.
Then we  have $P(S_i = 1) = r_i(t)$, $S_i,S_j$ are independent for $i\neq j$.
We now introduce the notation $\delta(s_1,s_2,...,s_k)$, which represents the state of failure for the system given the components' state being $(s_1,s_2,...,s_k)$.
According to the system structure, $\delta(s_1,s_2,...,s_k)$ is either $0$ (fail) or $1$ (work) and is deterministic.  
For a given system, we can write that $R(t) = \sum_{(s_1,...,s_k)\in\{0,1\}^k} P(S_i=s_i,i=1,2,...,k)\delta(s_1,s_2,...,s_k) = \sum_{(s_1,...,s_k)\in\{0,1\}^k} r_i(t)^{s_i}\delta(s_1,s_2,...,s_k) = \sum_{\delta(s_1,s_2,...,s_k)=1} r_i(t)^{s_i}$.
    It is easy to see that $R(t)$ is a polynomial w.r.t. $r_1,r_2,...,r_k$.
\end{proof}

\begin{proof}[Proof of Theorem~\ref{thm:BP and BB}]

It is easy to show that $\hat{R}^*_{j}$'s take values within the interval $[0,1]$ and are monotonically decreasing w.r.t. $t$. 
Moreover, $R_{BP} = \hat{R}^*_{(\lceil B\alpha \rceil)}(t)$ is the $ \lceil B\alpha \rceil $-th smallest values of $\hat{R}^*_{1},\ldots,\hat{R}^*_{B}$, thus $R_{BP}$ fall within the interval $[0,1]$.

Next we prove that $R_{BP}$ is a monotonically decreasing function w.r.t. $t$.
Let $ t_1 \leq t_2 $, and consider the set
$ A = \{ b : R_b(t_1) \leq R_{\lceil B\alpha \rceil}(t_1) \}$.
Clearly, the cardinality satisfies $ |A| \geq \lceil B\alpha \rceil $. Since each $ R_b(t) $ is a non-increasing function of $ t $, it follows that for all $ b \in A $,
\[
R_b(t_2) \leq R_b(t_1) \leq R_{\lceil B\alpha \rceil}(t_1).
\]
Thus there are at least $ \lceil B\alpha \rceil $ values in the set $\{R_b(t_2), b = 1,\ldots,B\}$ that less than or equal to $ R_{\lceil B\alpha \rceil}(t_1) $. Therefore, the $ \lceil B\alpha \rceil $-th smallest value in $\{R_b(t_2), b = 1,\ldots,B\}$ satisfies
\[
R_{\lceil B\alpha \rceil}(t_2) \leq  R_{\lceil B\alpha \rceil}(t_1).
\]
This establishes that $ R_{\lceil B\alpha \rceil}(t) $ is a monotonically decreasing function of $ t $.

Finally, we demonstrate that the asymptotic coverage probability derived in this theorem aligns with the cases presented on page 510 of \cite{shao2008mathematical}.
By Lemma~\ref{lem:psi}, the function $R(t) = \psi(r_1,\ldots,r_s)$ is a polynomial and is four times continuously differentiable with bounded derivatives. Furthermore, since $G_{\theta_i}$ is absolutely continuous and thus has an absolutely continuous probability density function. By the Riemann–Lebesgue lemma, the characteristic function of the distribution $G_{\theta_i}(t)$ satisfies
$$\lim_{t \rightarrow \infty}\sup |\chi(t)| = 0.$$ 
Additionally, we have assumed that the distribution of $G_{\theta_i}(x)$ admits finite moments of at least fourth order. Combining these properties, the distribution of $S_n = \sqrt{n}(\hat{R}-R)$ admits the Edgeworth expansion and satisfies all regularity conditions outlined on page 510 of \cite{shao2008mathematical}. 
It follows that both $R_{BB}$ and $R_{BP}$ are first-order asymptotically correct.
\end{proof}

\begin{proof}[Proof of Theorem~\ref{thm:order of DBP}]
Under assumptions analogous to those in Theorem~\ref{thm:BP and BB}, all regularity conditions are satisfied.  
Thus, the asymptotic property posted in this theorem is a special case of Theorem 1 in \citep{martin}.
\end{proof}

\begin{proof}[Proof of Theorem~\ref{thm:mme}:]
Let $ z_{i,j} = (x_{i,j} - \mu_i)/\sigma_i $ denote the standardized sample.
$ z_{i,j}, j=1,...,n_i$ are then i.i.d. samples from the distribution $F_i(x)$, for $i=1,2,...,s$. 
The sample mean and variance for $z_{i,j} , j=1,2,...,n_i$ are 
$$\bar{Z}_{n_i} = \sum_{j=1}^{n_i} z_{i,j},\quad M_{n_i}^2 = (n_i-1)^{-1}\sum_{j=1}^{n_i} (z_{i,j} - \bar{Z}_{n_i})^2,$$
and thus $\bar{Z}_{n_i} = (\bar{x}_i - \mu_i)/\sigma_i, \quad M_{n_i} = s_i/\sigma_i$, where $\bar{x}_i $ and $s_i$ are sample mean and variance of $x_{i,j} , j=1,2,...,n_i$.
From Eqs.~\eqref{equ:mu and sigma} and \eqref{equ:estimator}, we have
\[
 \hat{r}_i(t) =  1 - F_i \left( (\log t - \hat{\mu}_i)/\hat{\sigma}_i \right) = 1 - F_i \left( \left( \frac{\log t - \mu_i}{\sigma_i} - \bar{Z}_{n_i} \right) M_{n_i}^{-1} \kappa_{i2} + \kappa_{i1} \right).
\]  
Since  $
    r_i(t) = 1 - F_i \left( (\log t - \mu_i)/{\sigma_i} \right)$,  it follows that  
\[
        \hat{r}_i(t) =  1 - F_i \left[ \left\{ F_i^{-1} \left( 1 - r_i(t) \right) - \bar{Z}_{n_i} \right\} M_{n_i}^{-1} \kappa_{i2} + \kappa_{i1} \right].
\]  
This demonstrates that the distribution of $ \hat{r}_i(t) $ is identical to the distribution of  
\[
1 - F_i \left[ \left\{ F_i^{-1} \left( 1 - r_i(t) \right) - \bar{Z}_{n_i} \right\} M_{n_i}^{-1} \kappa_{i2} + \kappa_{i1} \right],
\] 
where $\bar{Z}_{n_i}$ and $M_{n_i}^2$ denote the sample average and the sample variance of $n_i$ samples drawn from the distribution $F_i(x)$.
This proves \eqref{equ:MME}. 

In the following, we consider the Weibull, log-normal, and Exponential distributed lifetimes as examples and develop the corresponding further formulas.\\
\textbf{1) Weibull} type component, with $F_i(x) = F_e(x) = 1-e^{-e^x}$,  the population mean and variance are $\kappa_{i1} =  -\gamma, \kappa_{i2} = \pi/\sqrt{6}$, where $\gamma \approx 0.5772$. Thus we have  
\begin{equation*}
\hat{r}_i(t) \sim_{\text{i.i.d}}  \exp\left( -\exp\left( \frac{\log (-\log r_i(t)) - \bar{Z}_{n_i}}{M_{n_i}} \cdot \frac{\pi}{\sqrt{6}} - \gamma \right) \right).
\end{equation*}
\textbf{2) Log-normal} type component, with $ F_i(x) = \Phi(x) $, we have $ \kappa_{i1} = 0 $ and $ \kappa_{i2} = 1 $.  
Thus we have
\begin{equation*}
\hat{r}_i(t) = 1 - \Phi \left\{ (\Phi^{-1} \left( 1 - r_i(t)\right) - \bar{Z}_{n_i})/M_{n_i}  \right\} 
= \Phi \left\{    (\Phi^{-1}(r_i(t)) + \bar{Z}_{n_i})/M_{n_i} \right\}.
\end{equation*}
Since $ \bar{Z}_{n_i} $ and $ M_{n_i}^2 $ are the sample mean and sample variance of $ n_i $ samples from a normal distribution, we have that $ \bar{Z}_{n_i}/M_{n_i} $ follows a Student's t-distribution $ t(n_i - 1) $, and $ (n_i - 1) M_{n_i}^2 $ follows a chi-squared distribution $ \chi^2(n_i - 1) $.\\  
\textbf{3) Exponential} type component with the distribution function $P(T_i \leq t) = 1 - \text{e}^{-\lambda_i t}$ is a special case.
We can directly obtain the moment estimator for $\lambda_i$ from the fact $\mathbb{E}(t_{i1}) = 1/\lambda_i$, yielding $\hat{\lambda}_i = n_i /  \sum_{j=1}^{n_i} t_{i,j} $. 
And the corresponding moment-based estimator 
\begin{equation*}
\hat{r}_i(t) = \exp\left( -\hat{\lambda}_i t \right)
= \exp\left( \hat{\lambda}_i \log r_i(t)/\lambda_i\right).
\end{equation*}
Since $\lambda_i / \hat{\lambda}_i = \lambda_i \sum_{j=1}^{n_i}  t_{i,j}/n_i$ follows the Gamma distribution $\Gamma(n_i,n_i)$, we have
\begin{equation*}
    \hat{r}_i(t) \sim_{\text{i.i.d}} \exp\left(\log r_i(t) / M_{n_i}\right),
\end{equation*}
where $M_{n_i} \sim \Gamma(n_i,n_i)$. 

\end{proof}

The proof of Theorem~\ref{thm:order of LCL} relies on the following lemma from \cite{newton1994bootstrap}, which establishes almost sure convergence for ergodic processes under conditional expectations.
\begin{lemma}\label{lem:recycling}
Let $ Z = (Z_1, Z_2, \dots) $ be a real-valued, stationary ergodic process, and let $ Y $ be a random vector independent of $Z$. If the real-valued measurable function $ f $ satisfies $ \mathbb{E}|f(Y, Z_1)| < \infty $, then, as $ n \to \infty $,

\[
\frac{1}{n} \sum_{i=1}^{n} f(Y, Z_i) \overset{\text{a.s.}}{\longrightarrow} \mathbb{E}[f(Y, Z_1) \mid Y]  
\] 
\end{lemma}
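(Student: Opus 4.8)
The plan is to freeze the independent random vector $Y$, reduce the statement to Birkhoff's pointwise ergodic theorem for the stationary ergodic sequence $Z$ applied with a fixed integrand, and then lift the resulting family of almost-sure statements (one for each frozen value of $Y$) back to a single joint almost-sure statement using Fubini's theorem together with the independence $Y\perp Z$.

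First I would realize $Y$ and $Z$ on the product space $(\Omega_Y\times\Omega_Z,\ \mathcal F_Y\otimes\mathcal F_Z,\ P_Y\otimes P_Z)$, which is legitimate precisely because $Y$ is independent of $Z$. For a fixed value $y$ write $g_y(\cdot)=f(y,\cdot)$. The integrability hypothesis $\mathbb E|f(Y,Z_1)|<\infty$ combined with Tonelli's theorem gives $\int_{\Omega_Y}\big(\int_{\Omega_Z}|f(y,Z_1)|\,dP_Z\big)\,dP_Y=\mathbb E|f(Y,Z_1)|<\infty$, so the inner integral $\mathbb E|g_y(Z_1)|$ is finite for $P_Y$-almost every $y$; call this full-measure set $G$. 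For each $y\in G$ the integrand $g_y$ is $P_Z$-integrable, so Birkhoff's pointwise ergodic theorem applies to the stationary ergodic process $Z$: there is a $P_Z$-null set $N_y$ such that for all $\omega_Z\notin N_y$ one has $n^{-1}\sum_{i=1}^n g_y(Z_i(\omega_Z))\to \mathbb E[g_y(Z_1)]=\mathbb E[f(y,Z_1)]$. The limit is the \emph{unconditional} mean rather than a conditional one because ergodicity makes the shift-invariant $\sigma$-field trivial.

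Next I would organize these per-$y$ statements into a single event. Introduce
\[
A=\Big\{(y,\omega_Z): \tfrac1n\textstyle\sum_{i=1}^n f(y,Z_i(\omega_Z))\longrightarrow \mathbb E[f(y,Z_1)]\Big\},
\]
whose $y$-section $A_y$ satisfies $P_Z(A_y)=1$ for every $y\in G$. Assuming $f$ is jointly measurable, the map $(y,\omega_Z)\mapsto n^{-1}\sum_{i=1}^n f(y,Z_i(\omega_Z))$ is product-measurable for each $n$, so $A\in\mathcal F_Y\otimes\mathcal F_Z$, and Fubini's theorem yields $(P_Y\otimes P_Z)(A)=\int_{\Omega_Y}P_Z(A_y)\,dP_Y=P_Y(G)=1$. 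Hence $n^{-1}\sum_{i=1}^n f(Y,Z_i)\to\phi(Y)$ almost surely, where $\phi(y)=\mathbb E[f(y,Z_1)]$.

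Finally I would identify the limit with the asserted conditional expectation. Because $Y\perp Z$, the standard freezing identity $\mathbb E[f(Y,Z_1)\mid Y]=\phi(Y)$ holds almost surely, which matches the limit obtained above and closes the argument. I expect the only genuine obstacle to be the measurability bookkeeping: verifying that $A$ is product-measurable so that Fubini is applicable, and confirming that the exceptional set outside $G$ (where $g_y$ fails to be integrable) may be discarded without affecting the joint full-measure conclusion. Once these points are secured, the ergodic theorem supplies the per-$y$ convergence and the freezing identity supplies the limit, so the remainder is routine.
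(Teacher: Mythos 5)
The paper does not prove this lemma at all: it is imported verbatim, with a citation, from \cite{newton1994bootstrap} (the bootstrap-recycling literature), so there is no in-paper argument to compare against. Your proof is correct and is the standard one for this statement: transfer to the product space via independence, use Tonelli to get $P_Y$-a.e.\ integrability of the sections $f(y,\cdot)$, apply Birkhoff's pointwise ergodic theorem section-by-section, glue the sectionwise null sets with Fubini (which is where joint measurability of $f$ and measurability of $y\mapsto \mathbb{E}[f(y,Z_1)]$ are needed, as you note), and identify the limit with $\mathbb{E}[f(Y,Z_1)\mid Y]$ by the freezing identity. Nothing is missing; this is essentially the argument of the cited source.
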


\begin{proof}[Proof of Theorem~\ref{thm:order of LCL}:]
Assume that the sample sizes are equal, i.e., $n_i = n$ for all $i=1,\ldots,s$.
For $x \in [0,1]$, the bootstrap version of $U(x)$ is given by
\begin{equation*}   
 U_j^*(x) = \text{P}(\hat{R}^{**}_{j,l} \leq x \mid \mathcal{D}, \mathcal{D}_{j}^*),
\end{equation*}
and the distribution function of $ U_j^*(\hat{R}) $, conditional on $\mathcal{D}$, is 
\begin{equation*}
G(x) = \text{P}(U_j^*(\hat{R}) \leq x \mid \mathcal{D}),
\end{equation*}
which is independent of $j$. 
In our Algorithm~\ref{al:dbp}, the $U_j^*(\hat{R})$ is approximated via Monte Carlo simulation, yielding the 
$\hat{u}_j$'s and the empirical approximation for $G(x)$   
\begin{equation*}
    \tilde{G}(x) = B^{-1} \sum_{j=1}^B I(\hat{u}_j \leq x).
\end{equation*}
Now we prove that, given $\mathcal{D}$ and $x$, $\tilde{G}(x) \overset{\text{a.s.}}{\longrightarrow} G(x)$ as $B, C \longrightarrow \infty$. 
Decomposing $\tilde{G}(x)$ into
\begin{equation*}
    \tilde{G}(x) = \text{P} \left(\hat{u}_j \leq x \mid \mathcal{D}, \bar{Z}_{n_i}^{**i}, M_{n_i}^{**l}, i=1,\ldots,s, l=1,\ldots,C \right) + \tilde{e}_B,
\end{equation*}
where the first term is the conditional distribution function of $\hat{u}_j$ under the randomness of the first layer auxiliary statistic, and the second term $\tilde{e}_B$ is the remaining error.

For the first term, given $j=1,\ldots,B$, we have
\begin{equation*}
     \hat{u}_j = C^{-1} \sum_{l=1}^C I\left(\hat{R}_{j,l}^{**} \leq \hat{R} \right), 
\end{equation*}
where $\hat{R}_{j,l}^{**} = \psi(\hat{r}_{1j,l}^{**}, \ldots, \hat{r}_{sj,l}^{**})$ and 
\begin{equation*}
     \hat{r}_{1j,l}^{**} = 1 - F_1 \left[ \left\{F_1^{-1}(1 - \hat{r}_{1j}^*(t)) -  \bar{Z}_{n_1}^{**l}\right\} (M_{n_1}^{**l})^{-1} \kappa_{12} + \kappa_{11} \right],
\end{equation*}
\begin{equation*}
         \hat{r}_{1j}^{*} = 1 - F_1 \left[ \left\{F_1^{-1}(1 - \hat{r}_{1}(t)) -  \bar{Z}_{n_1}^{*j}\right\} (M_{n_1}^{*j})^{-1} \kappa_{12} + \kappa_{11} \right].
\end{equation*}
The $ (\bar{Z}_{n_1}^{**l}, M_{n_1}^{**l}) $'s  are independent of $ (\bar{Z}_{n_1}^{*j}, M_{n_1}^{*j}) $.
Furthermore, each term $ I(\hat{r}_{1j,l}^{**} \leq \hat{r}_1) $  is bounded and thus $\mathbb{E} (I(\hat{r}_{1j,l}^{**} \leq \hat{r}_1)) \leq 1$. From lemma~\ref{lem:recycling}, we have 
\begin{equation*}
   \text{P} \left( \lim_{C \rightarrow \infty} \hat{u}_j = \mathbb{E}(I(\hat{R}_{j,l}^{**} \leq \hat{R})) \mid \mathcal{D}, \bar{Z}_{n_i}^{*j}, M_{n_i}^{*j}, i=1,\ldots,s   \right) = 1,
\end{equation*}
which is equivalent to
\begin{equation*}
   \text{P} \left( \lim_{C \rightarrow \infty} \hat{u}_j = U_j^*(\hat{R}) \mid \mathcal{D}, \mathcal{D}_j^*   \right)
   = 1
\end{equation*}
This shows that $\hat{u}_j$ almost converges to $U_j^*(\hat{R})$ given $\mathcal{D}, \mathcal{D}_j^*$ for any $j=1,\ldots,B$.
By the dominated convergence theorem for conditional expectations, as $C \rightarrow \infty$, we have
\begin{equation}
\label{equ:uj}
\begin{aligned}
&\text{P} \left(\hat{u}_j \leq x \mid \mathcal{D},  \bar{Z}_{n_i}^{**l}, M_{n_i}^{**l}, i=1,\ldots,s, l=1,\ldots,C \right) \\
& \overset{\text{a.s.}}{\longrightarrow} 
\text{P} \left(U_j^*(\hat{R}) \leq x \mid \mathcal{D}, \bar{Z}_{n_i}^{**l}, M_{n_i}^{**l}, i=1,\ldots,s \right) 
= G(x).
\end{aligned}
\end{equation}
The final equivalence holds since $U_j^*$ is independent of  $ \bar{Z}_{n_i}^{**l}, M_{n_i}^{**l} $'s. 

Next, consider the remaining error term
\begin{equation}
\label{equ:eB}
\tilde{e}_B = \tilde{G}(x) - \text{P} \left(\hat{u}_j \leq x \mid \mathcal{D}, \bar{Z}_{n_i}^{**i}, M_{n_i}^{**l}, i=1,\ldots,s, l=1,\ldots,C \right).
\end{equation}
Rewrite it as 
\begin{equation*}
    \tilde{e}_B = B^{-1} \sum_{j=1}^B \tilde{e}_{B,j},
\end{equation*}
where
\begin{equation*}
\tilde{e}_{B,j} =    I(\hat{u}_j \leq x ) - \text{P} \left(\hat{u}_j \leq x \mid \mathcal{D}, \bar{Z}_{n_i}^{**l}, M_{n_i}^{**l}, i=1,\ldots,s, l=1,\ldots,C \right).
\end{equation*}
Given $\mathcal{D}$ and $\bar{Z}_{n_i}^{**l}, M_{n_i}^{**l},$ $i=1,\ldots,s,$ $l=1,\ldots,C$, the $\tilde{e}_{B,j}$'s are conditionally independent since the pairs $(\bar{Z}_{n_i}^{*j},M_{n_i}^{*j})$ are conditionally independent. Moreover, each $\tilde{e}_{B,j}$ has zero mean and satisfies $|\tilde{e}_{B,j}| \leq 1$. Therefore, we have
\begin{equation*}
\begin{aligned}
    \text{P} ( |\tilde{e}_B| \geq \epsilon) 
    &= \mathbb{E}\left\{ \text{P} \left(|\tilde{e}_B| \geq \epsilon
    \mid \mathcal{D}, \bar{Z}_{n_i}^{**l}, M_{n_i}^{**l}, i=1,\ldots, s, l=1,\ldots,C \right) \right\} \\
    &\leq (B\epsilon)^{-3} \mathbb{E}\left\{ \mathbb{E} \left\{ \left( \sum_{j=1}^B  \tilde{e}_{B,j} \right)^3
    \middle| \mathcal{D}, \bar{Z}_{n_i}^{**l}, M_{n_i}^{**l}, i=1,\ldots, s, l=1,\ldots,C \right\} \right\}\\
    &\leq (B\epsilon)^{-3} \mathbb{E} \left\{ \sum_{j=1}^B  \mathbb{E}\left(\tilde{e}_{B,j}^3\right) \right\} \leq B^{-2} \epsilon^{-3},
\end{aligned}
\end{equation*}
which applies
$$\sum_B \text{P}(|\tilde{e}_B| \geq \epsilon) < \infty.$$
By the Borel-Cantelli lemmas, we have $\tilde{e}_B \overset{\text{a.s.}}{\longrightarrow} 0 $ as $B\rightarrow \infty$. 
Combining this result with the almost sure convergence in \eqref{equ:uj}, we conclude
$\tilde{G}(x) \overset{\text{a.s.}}{\longrightarrow} G(x)$. 
Thus we have $\hat{u}_{(k)} \overset{\text{a.s.}}{\longrightarrow} G^{-1}(\alpha)$ as $B,C \rightarrow \infty$.

Finally, we have
\begin{equation*}
    \text{P}(R \geq R_{DBPT}) = \text{P} \left(U(R) \geq \hat{u}_{(k)} \right) \overset{\text{a.s.}}{\longrightarrow}
    \text{P} \left(U(R) \geq G^{-1}(\alpha) \right) = C(G^{-1}(\alpha)).
\end{equation*}
From the \cite{beran1987prepivoting}, we have $C(x) =  1 - G(x) + O(n^{-1})$.
This proves that
\begin{equation*}
    \text{P}(R \geq R_{DBPT}) = 1 - G(G^{-1}(\alpha)) + O(n^{-1}) = 1 -  \alpha + O(n^{-1}) .
\end{equation*}
This proof can be easily extended into the case of other lifetime models.

\end{proof}

\end{document}